\newtheorem{theorem}{Theorem}
\newtheorem{proposition}{Proposition}
\newtheorem{lemma}{Lemma}
\newcommand{\ignore}[1]{}  
\begin{document}
\title{Pitch Plane Trajectory Tracking Control for Sounding Rockets via Adaptive Feedback Linearization}

\author{%
Pedro dos Santos\\ 
IDMEC, IST, University of Lisbon\\
Av. Rovisco Pais 1\\
Lisbon, Portugal 1049-001\\
pedrodossantos31@tecnico.ulisboa.pt
\and 
Paulo Oliveira\\
IDMEC, IST, University of Lisbon\\
Av. Rovisco Pais 1\\
Lisbon, Portugal 1049-001\\
paulo.j.oliveira@tecnico.ulisboa.pt
\thanks{\footnotesize 979-8-3503-5597-0/25/$\$31.00$ \copyright2025 IEEE}              
}

\maketitle

\thispagestyle{plain}
\pagestyle{plain}

\maketitle

\thispagestyle{plain}
\pagestyle{plain}

\begin{abstract}
This paper proposes a pitch plane trajectory tacking control solution for suborbital launch vehicles relying on adaptive feedback linearization. Initially, the 2D dynamics and kinematics for a single-engine, thrust-vector-controlled sounding rocket are obtained for control design purposes. Then, an inner-outer control strategy, which simultaneously tackles attitude and position control, is adopted, with the inner-loop comprising the altitude and pitch control and the outer-loop addressing the horizontal (downrange) position control. Feedback linearization is used to cancel out the non-linearities in both the inner and outer dynamics. Making use of Lyapunov stability theory, an adaptation law, which provides online estimates on the inner-loop aerodynamic uncertainty, is jointly designed with the output tracking controller via adaptive backstepping, ensuring global reference tracking in the region where the feedback linearization is well-defined. The zero dynamics of the inner-stabilized system are then exploited to obtain the outer-loop dynamics and derive a Linear Quadratic Regulator (LQR) with integral action, which can stabilize them as well as reject external disturbances. In the outermost loop, the estimate on the correspondent aerodynamic uncertainty is indirectly obtained by using the inner loop estimates together with known aerodynamics relations. The resulting inner-outer position control solution is proven to be asymptotically stable in the region of interest. Finally, the control strategy is implemented in a Matlab/Simulink simulation environment composed by the non-linear pitch plane dynamics and kinematics model and the environmental disturbances to assess its performance. Using a single-stage sounding rocket, propelled by a liquid engine, as reference vehicle, different mission scenarios are tested in the simulation environment to verify the adaptability of the proposed control strategy. The system is able to track the requested trajectories while rejecting external wind disturbances. Furthermore, the need to re-tune the control gains in between different mission scenarios is minimal to none.
\end{abstract}

\tableofcontents

\section{Introduction}
In the last decades, suborbital launch vehicles endowed the scientific community with tools to perform a
myriad of research studies \cite{nasa_sr,esa_sr}. Commonly denominated as sounding rockets, they provide long periods of
microgravity conditions, allow to collect in-situ data across all atmospheric layers, and enable rapid Earth
surveillance and monitoring \cite{nasa_sr, esa_sr,noga}. Simultaneously, they can be instrumental as low-cost testing platforms to
augment Technology Readiness Levels (TRL) of different systems and payloads, before its use in high risk, potentially crewed,
orbital/space missions \cite{noga}. More recently, following the successful efforts of private companies, such as SpaceX, Blue Origin, and Rocket Lab, a growing number of both private corporations and national/international agencies, namely at European level, are investing towards a new generation of reusable micro and small launch vehicles, which motivates the development of suborbital platforms for technology demonstration purposes \cite{ariane,simplicio,callisto}. In addition, suborbital transportation and space tourism motivated a market increase which impacts the overall need for cost-effective, dedicated suborbital launchers \cite{spacetourism}.

In face of increasingly demanding mission scenarios, focusing on reusability and reconfigurability, active stabilization and trajectory tracking are now of upmost importance for sounding rockets, which historically relied on passive aerodynamic stabilization. When actively controlled, a sounding rocket generally classifies as an underactuated autonomous vehicle, i.e, the number of control inputs is less than the number of degrees-of-freedom \cite{aguiar}. Particularly, when considering thrust vector control as the only actuation method, the force and torque inputs are coupled, posing a challenging trajectory control task due to the presence of non-holonomic constraints \cite{mahmut}. The classical approach to the trajectory tracking control problem for underactuated launch vehicles relies on the decoupling of the multi-variable dynamics into lower order fully-actuated models, typically by separating the attitude and position dynamics \cite{Tewari}. An external guidance loop controls the position by generating attitude commands and an internal attitude control loop is responsible for tracking those commands. The control design is then performed separately, relying on the time-scale separation between the attitude and position dynamics to neglect the coupled behavior and bypass the non-holonomic constraints. In fact, a clear separation can be found in the literature between both problems: the term ``control'' is usually interchangeable with attitude control, whereas position control is referred to as guidance, and are separately addressed. Traditionally, attitude control is achieved by the application of linear techquines which rely on the local linearization of the model around the nominal trajectory and gain-scheduling \cite{Wie}. These include PID control \cite{pid}, the Linear Quadratic Regulator (LQR) \cite{dossantos}, $H_\infty$ control \cite{simplicio2, sagliano}, and Linear Parameter-Varying (LPV) control \cite{tapia}. As for the guidance loop, recent years have seen an increasing investigation effort on online guidance algorithms able to cope with more demanding mission scenarios \cite{successive_conv}. These algorithms make use of the increase in the available computational power in order to generate in real-time optimal reference trajectories and the required attitude commands to track it.  
 
While trying to exploit the full range of applications for soundings rockets, the classical approach to launch vehicle trajectory control may be rendered ineffective. The lower development costs and overall reduced operational logistics allow for more responsive launch missions \cite{noga}, which may be tailored to requirements that change with the mission objectives, imposing the need to track considerably distinct trajectories in large flight envelopes. Following the classical approach would require an extensive redesign and adaptation of the control system to each mission scenario, since it strongly depends on the trajectory to follow. Moreover, vehicle reconfigurability would pose an additional burden due to the shifting vehicle parameters. On the other hand, the separation between the attitude control and guidance loops hinders the derivation of mathematical proofs that ensure that the final system, i.e., the interconnection of both loops, is stable. It is then clear the need for integrated, adaptive, and global trajectory tracking control solutions which accommodate for reconfigurable vehicle parameters and distinct mission scenarios without requiring extensive redesign. To achieve this, non-linear control techniques, which consider the full dynamics of the system, are instrumental.

Several works can be found in the literature which leverage non-linear control techniques, mostly based on Lyapunov theory \cite{Khalil}, to tackle the launch vehicle attitude control problem. Among others, sliding mode control \cite{sliding_mode}, backstepping \cite{backstepping}, and non-linear dynamic inversion (or feedback linearization) \cite{ndi1, ndi2} are quite common. However, the literature is scarce when it comes to jointly address position and attitude control for launch vehicle trajectory tracking. On the other side, this control approach is standard for other underactuated mechanical systems, namely quadrotors \cite{quadrotor}, autonomous marine vehicles \cite{marine}, missiles \cite{missile}, and guided projectiles \cite{projectile}. In this work, we follow the methodology proposed by one of the authors for quadrotors in \cite{Martins}, where inner-outer loop feedback linearization is applied and the zero dynamics \cite{Isidori} are stabilized to provide horizontal position control. When applied to launch vehicles, feedback linearization is partially used for the attitude control loop and, to overcome model uncertainty, incremental techniques, such as incremental non-linear dynamic inversion (INDI) have gained more relevance \cite{Sieberling2010, Mooij2020}. 

In this paper, a trajectory tracking controller for a liquid engine sounding rocket, with the dynamics restricted to the pitch plane, is proposed, in which position and attitude control are integrated in an inner-outer loop structure. Thrust vector control is used as actuation method, meaning that the control inputs are the thrust magnitude and the thrust vector deflection angle. The inner-loop comprises the altitude and pitch control and the outer-loop addresses the horizontal (downrange) position control. Feedback linearization is used to cancel out the non-linearities in both the inner and outer dynamics, reducing them to two double integrators acting on each of the output tracking variables. Uncertainty is considered when canceling the aerodynamic terms and is estimated in real-time in the inner loop via adaptive backstepping. The zero dynamics of the inner-stabilized system are then exploited to obtain the outer-loop dynamics and derive a Linear Quadratic Regulator (LQR) with integral action, which can stabilize them as well as reject external disturbances.

With respect to the literature, the main contribution of our strategy is the ability to track arbitrary sufficiently smooth reference trajectories, while ensuring global stability in the region where the feedback linearization is well-defined. Moreover, the adaptation scheme tackles the inherent drawback of feedback linearization of requiring accurate knowledge of system parameters for dynamic inversion, which are often not available or have high uncertainty, without the need for incremental techniques. The integrated design of position and attitude control leads to a solution than can easily be re-purposed for different mission scenarios and vehicle configurations with minimal effort.

This paper is structured as follows: in Section \ref{sec:physical_model}, the pitch plane dynamics and kinematics for a sounding rocket with a single gimbaled engine are derived. In Section \ref{sec:control}, the proposed inner-outer trajectory tracking control solution is derived. In Section \ref{sec:implement}, the details of the computational implementation of the control architecture in the simulation model are detailed, while the simulation results are presented in Section \ref{sec:sim_reuslts}. Finally, some conclusions are drawn and reference to future work is made in Section \ref{sec:conc}.

\section{Physical Model}\label{sec:physical_model}

In this section, the pitch plane dynamics and kinematics of a generic launch vehicle with a single gimbaled engine are derived following the Newton-Euler formalism. To obtain the physical model some assumptions are used: the launch vehicle is assumed to be a rigid body; it is assumed to be axially symmetric; and the Earth's curvature and rotation are neglected. All these assumptions are often followed in the literature \cite{Tewari,Wie}, and are considered valid for a first stage design of the pitch plane trajectory tracking control system.

\subsection{Reference Frames} 
\begin{figure}[h]
\centering
\includegraphics{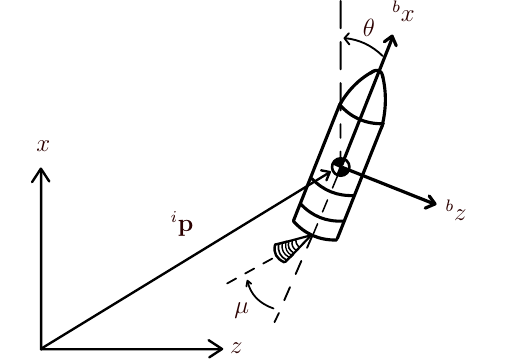}\\
\caption{\textbf{Reference frames.}}
\label{fig:frames}
\end{figure}

As depicted in Fig. \ref{fig:frames}, the model is deduced considering a body frame, \{b\}, attached to the center of mass of the vehicle, and an inertial frame, \{i\}, with its origin located at the launch site. The inertial position of the vehicle, $\prescript{i}{}{\mathbf{p}} = \left[\,x\:\:\:\:z\right]^T$, is the vector that connects \{i\} to \{b\}, through which the trajectory to be tracked is defined as $\prescript{i}{}{\mathbf{p}_d(t)} = \left[\,x_d(t)\:\:\:\:z_d(t)\right]^T$. The coordinate transformation from \{b\} to \{i\} is given by the following rotation matrix $\mathbf{R} \in SO(2)$:
\begin{equation}\label{eq:rotation}
    \mathbf{R} = \begin{bmatrix}
        \cos{\theta} & \sin{\theta}\\
        -\sin{\theta} & \cos{\theta}
    \end{bmatrix}\,,
\end{equation}
where the rotation angle $\theta$ is defined as the pitch angle. The inverse coordinate transformation, from \{i\} to \{b\}, is simply given by ${\mathbf{R}}^T$.
\subsection{External Forces and Moments} 

A launch vehicle experiences three main sources of external forces and moments during flight: gravitational, aerodynamic, and propulsive.
Considering the Earth as a perfect sphere, and looking at the definition of the inertial frame \{i\}, the gravity force is simply
\begin{equation}\label{gravity}
\prescript{i}{}{\mathbf{f_g}} = \left[\,-mg \:\:\:\: 0\,\right]^T\,,
\end{equation}
where $m$ is the instantaneous mass of the vehicle and $g$, the gravitational acceleration, varies with the altitude according to 
\begin{equation}\label{eq:gravity}
g=g_0\,R_E^2\,/\,(R_E+x)^2\,,
\end{equation}
where $R_E$ is the mean Earth radius and $g_0$ is the gravitational acceleration at surface level.

The magnitude and direction of the thrust vector, $\prescript{b}{}{\mathbf{f_p}}$, are the adjustable inputs that enable active trajectory tracking control. Looking at Fig. \ref{fig:frames}, the thrust vector expressed in \{b\} is given by
\begin{equation}
    \prescript{b}{}{\mathbf{f_p}} = \left[\,\prescript{b}{}{f_{p_x}} \:\:\:\: \prescript{b}{}{f_{p_z}}\,\right]^T =\left[\, T\cos{\mu}\:\:\:\: T\sin{\mu}\,\right]^T\,,
\end{equation}
where $T$ is the thrust magnitude and $\mu$ is the engine's gimbal angle. Upon engine deflection, the thrust vector produces a pitching moment which is used to control the rotation of the vehicle, given by
\begin{equation}
    \tau_p =  \prescript{b}{}{f_{p_z}}\,l = T\sin{\mu}\,l\,,
\end{equation}
where $l$ is the distance between the gimbal point and the center of mass of the vehicle, $x_{cm}$, both measured from the top.

The generation of thrust causes the depletion of propellant. Assuming ideal propulsion and a constant vacuum specific impulse $I_{\text{sp}_0}$, the propellant depletion rate, or mass flow rate, is given by \cite{successive_conv}
\begin{equation}\label{eq:massflow}
    \dot{m} = -\frac{T}{I_{\text{sp}_0}\,g_0}-\frac{p_a\,A_e}{I_{\text{sp}_0}\,g_0}\,,
\end{equation}
where the second term accounts for back-pressure drops and depends on the atmospheric pressure $p_a$ and nozzle exit area, $A_e$.

The motion of the vehicle through the fluid composing the atmosphere causes the appearance of aerodynamic loads. The aerodynamic force expressed in \{b\} can be modeled as
\begin{equation}
    \prescript{b}{}{\mathbf{f_a}} = \left[\,\prescript{b}{}{f_{a_x}} \:\:\:\: \prescript{b}{}{f_{a_z}}\,\right]^T = \left[\, -\overline{q}\,C_AS \:\:\:\: -\overline{q}\,C_NS\,\right]^T\,,
\end{equation}
where $\overline{q}$ is the dynamic pressure, and $C_A$ and $C_N$ are the axial and normal aerodynamic force coefficients, respectively, and $S$ is a reference area, typically the fuselage cross-sectional area. The force coefficients can be related to the Lift and Drag coefficients, $C_L$ and $C_D$, using the angle of attack, $\alpha$:
\begin{subequations}
\begin{equation}
C_A = C_D\cos{\alpha} - C_L\sin{\alpha}\,,
\end{equation}
\begin{equation}
C_N = C_L\cos{\alpha} + C_D\sin{\alpha}\,,
\end{equation}
\end{subequations}
with the angle of attack given by $\alpha = atan_2(w_\text{rel}, u_\text{rel})$, where $u_\text{rel}$ and $w_\text{rel}$ are the components of the fluid relative velocity vector expressed in the body frame, $\prescript{b}{}{\mathbf{v_{rel}}}$. This vector is computed through $\prescript{b}{}{\mathbf{v_{rel}}} = \prescript{b}{}{\mathbf{v}} - \prescript{b}{}{\mathbf{v_{w}}}$, where $\prescript{b}{}{\mathbf{v}}$ is the linear velocity vector and $\prescript{b}{}{\mathbf{v_{w}}}$ is the wind velocity vector, both expressed in \{b\}. The Lift and Drag coefficients can be stored as a function of the angle of attack and Mach number.

For the aerodynamic pitching moment, it is assumed that its only cause is the offset between the center of pressure, $x_{cp}$, where the aerodynamic forces are applied, and the center of mass. Hence, it is modeled as
\begin{equation}
    \tau_a = \prescript{b}{}{f_{a_z}}\,SM\,\overline{d}\,,
\end{equation}
where $\overline{d}$ is a reference length, usually the maximum diameter of the fuselage, and $SM = (x_{cp}- x_{cm})/\overline{d}$ is the static stability margin.
\subsection{Equations of Motion}

With the reference frames defined and the external forces and moments characterized, it is possible to obtain the pitch plane rigid-body equations of motion. Following the Newton-Euler formalism, and neglecting the impact of the mass, center of mass, and inertia (MCI) time derivatives and of moving masses (including “tail-wags-dog” moment and rocket jet
damping), the pitch plane rigid-body equations of motion are given by
\begin{equation}\label{eq:position}
    m\,\prescript{i}{}{\ddot{\mathbf{p}}} = -m\,g\,\mathbf{e_1}+\mathbf{R}\,\prescript{b}{}{\mathbf{f_p}} + \prescript{i}{}{\mathbf{f_a}} \,,
\end{equation}
\begin{equation}\label{eq:attitude}
   j_y\,\dot{q} = \tau_p + \tau_a\,,
\end{equation}
\begin{equation}\label{eq:attitudekin}
   \dot{\theta} = q\,,
\end{equation}
where $j_y$ is the transverse moment of inertia and $q$ is the angular velocity or pitch rate.

Taking the propulsive vector in the body frame as the control input, i.e.,  $\mathbf{u} = \prescript{b}{}{\mathbf{f_p}}$, and considering the inertial position and orientation as the outputs, the system can be written in state-space form as follows: 
\begin{equation}\label{eq:states}
      {\mathbf{x}}=\left[\,x\:\:\:\:\dot{x}\:\:\:\:z\:\:\:\:\dot{z}\:\:\:\:\theta\:\:\:\:q \,\right]^T\,,
\end{equation}
\begin{equation}\label{eq:sys_xdot}
\renewcommand{\arraystretch}{1.5}
    \dot{\mathbf{x}} = 
    \begin{bmatrix}
        \dot{x}\\
         -g + \frac{\prescript{i}{}{f_{a_x}}}{m}\\
         \dot{z}\\
         \frac{\prescript{i}{}{f_{a_z}}}{m}\\
        q\\
        \frac{\tau_a}{j_y} 
    \end{bmatrix}+
    \begin{bmatrix}
        0 & 0\\
        \frac{\cos{\theta}}{m} & \frac{\sin{\theta}}{m}\\
        0 & 0\\
        -\frac{\sin{\theta}}{m} & \frac{\cos{\theta}}{m}\\
        0 & 0\\
        0 & \frac{l}{j_y}
    \end{bmatrix}\,\mathbf{u}\,,
\end{equation}
\begin{equation}\label{eq:sys_y}
     {\mathbf{y}} = \left[\,x\:\:\:\:z\:\:\:\:\theta\,\right]^T\,.
\end{equation}

\section{Control}\label{sec:control}
To tackle the trajectory tracking control problem, an inner-outer loop structure is adopted (Fig.~\ref{fig:gen_arch}), in which the inner loop comprises the angular and vertical motion ($\theta$ and $x$) while the outer loop entails the downrange motion ($z$). As opposed to the traditional formulation, in which trajectory tracking is addressed by dividing the problem into attitude control and guidance, our approach relies on the inner-outer control structure to exploit the full dynamics of the system and obtain almost-global stability guarantees.
\begin{figure*}
\centering
\includegraphics{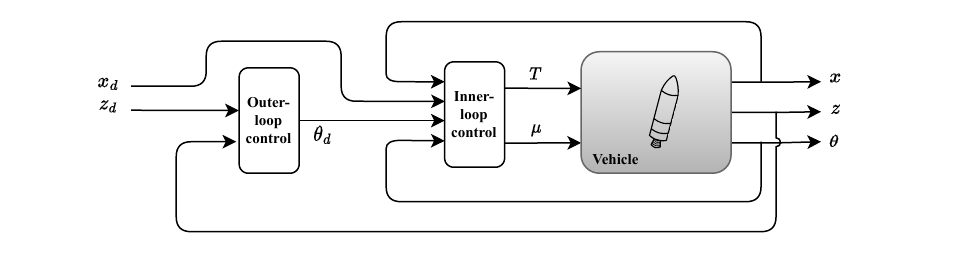}
\caption{\bf{High-level inner-outer control architecture.}}
\label{fig:gen_arch}
\end{figure*}

Feedback linearization is used to cancel out the non-linearities of the derived model and reduce the dynamics to a chain of integrators, with the aerodynamic forces and moment being estimated in real-time, following an adaptive backstepping methodology. The adaptation scheme does not require a model for the aerodynamics of the vehicle, which is a considerable source of uncertainty , partially addressing the inherent drawback known to feedback linearization of requiring an accurate knowledge of the plant dynamics and parameters.

\subsection{Feedback Linearization}

Feedback Linearization is a nonlinear control approach that aims to algebraically transform nonlinear dynamics of systems, through nonlinear change
of coordinates and nonlinear state feedback, into a
model that is linear in the new set of coordinates. The
linear model produced is an exact representation of the
original nonlinear model over a large set of operating
points \cite{Martins}. Given a nonlinear system of the form
\begin{equation}\label{eq:xdot}
    \dot{\mathbf{x}} = \mathbf{f(x)} + \mathbf{g(x)}\,\mathbf{u}\,,
\end{equation}
\begin{equation} \label{eq:y}
    \mathbf{y} = \mathbf{h(x)}\,,
\end{equation}
where ${\mathbf{f}} \in {\mathbb{R}}^n$, ${\mathbf{g}} \in {\mathbb{R}}^{n\times m}$, and ${\mathbf{h}} \in {\mathbb{R}}^m$ are sufficiently smooth nonlinear vector fields in a domain $D \subset {\mathbb{R}}^n$. According to Isidori \cite{Isidori}, if the system has a vector relative degree ${\mathbf{r}} = \{r_1,...\,,r_m\}$, with the sum of the components equal to its order, $r_1 +r_2+...+r_m = n$, then a fully linear and decoupled version can be obtained through the application of the following coordinate
transformation, or \textit{diffeomorphism}, $\boldsymbol\Phi({\mathbf{x}}) \in {\mathbb{R}}^n$:
\begin{equation}\label{eq:diffeomorphism}
\boldsymbol{\xi}_{j,\,k} = \boldsymbol{\phi}_{j,\,k} = {\mathcal{L}}^{k-1}_{{\mathbf{f}}}{\mathbf{h}}_j(\mathbf{x})\,,
\end{equation}
with $k \in \{1,...\,,r_j\}$ and where $\mathcal{L}$ represents the Lie derivative, and the nonlinear static state feedback control law
\begin{equation}\label{eq:input_feedback}
\mathbf{u} = -\boldsymbol{\Lambda}^{-1}(\mathbf{x})\,\mathbf{b(x)} + \boldsymbol{\Lambda}^{-1}(\mathbf{x})\,\mathbf{v}\,,
\end{equation}
where $\mathbf{v} \in {\mathbb{R}}^n$ is the vector of the transformed control inputs, $\boldsymbol{\Lambda}({\mathbf{x}}) \in {\mathbb{R}}^{m\times m}$ is the \textit{decoupling matrix} given by
\begin{equation}
    \boldsymbol{\Lambda}(\mathbf{x}) = 
    \begin{bmatrix}
        {\mathcal{L}}_{{\mathbf{g}}_1}{\mathcal{L}}^{r_1-1}_{\mathbf{f}}{\mathbf{h}}_1(\mathbf{x}) & \dots & {\mathcal{L}}_{{\mathbf{g}}_m}{\mathcal{L}}^{r_1-1}_{\mathbf{f}}{\mathbf{h}}_1(\mathbf{x})\\
        \vdots & \ddots & \vdots\\
        {\mathcal{L}}_{{\mathbf{g}}_1}{\mathcal{L}}^{r_m-1}_{\mathbf{f}}{\mathbf{h}}_m(\mathbf{x}) & \dots & {\mathcal{L}}_{{\mathbf{g}}_m}{\mathcal{L}}^{r_m-1}_{\mathbf{f}}{\mathbf{h}}_m(\mathbf{x})
    \end{bmatrix}\,,
\end{equation}
and $\mathbf{b(x)} \in {\mathbb{R}}^m$ is given by
\begin{equation}
    {\mathbf{b(x)}} = \left[\,{\mathcal{L}}^{r_1}_{\mathbf{f}}{\mathbf{h}}_1({\mathbf{x}})\:\:\dots\:\:{\mathcal{L}}^{m}_{\mathbf{f}}{\mathbf{h}}_m({\mathbf{x}})\,\right]^T\,.
\end{equation}\\
From (\ref{eq:input_feedback}), we observe that the \textit{decoupling matrix}, $\boldsymbol{\Lambda}(\mathbf{x})$, must be non-singular. With the application of the input-output feedback linearization, the resulting model is linear in the new set of coordinates and is composed by a set of $m$ chains of $r_j$ integrators with transformed inputs $\mathbf{v}_j$:
\begin{equation}\label{eq:intchain}
\begin{split}
        \boldsymbol{\dot{\xi}}_{j,1} &= \boldsymbol{\xi}_{j,2}\\
    &\vdots\\
    \boldsymbol{\dot{\xi}}_{j,r_j-1} &=\boldsymbol{\xi}_{j,r_j} \\
    \boldsymbol{\dot{\xi}}_{j,r_j} &= {\mathbf{v}}_j\,.
    \end{split}
\end{equation}

\subsection{Inner-loop}

In the majority of the works found in the literature, the
inner-loop, commonly referred to as attitude control, solely governs the attitude dynamics. In this work,
this loop also encapsulates the altitude dynamics since
this inclusion does not yield additional singularities and allows to assign the computation of all physical inputs - thrust magnitude and deflection angle - to the inner-loop. Let ${\mathbf{x_{in}}} \in {\mathbb{R}}^2$ be the vector of the state variables that compose the inner dynamics written in the tracking error coordinates
\begin{equation} \label{eq:innerstates}
    {\mathbf{x_\textbf{in}}}=\left[\,e_x\:\:\:\:\dot{e}_x\:\:\:\:e_\theta\:\:\:\:\dot{e}_\theta \,\right]^T\,,
\end{equation}
where $e_x = x-x_d$ and $e_\theta = \theta-\theta_d$ are the inner-loop tracking errors, and let ${\mathbf{u_{in}}} \in {\mathbb{R}}^2$ be the vector of the inner control inputs
\begin{equation}  \label{eq:inneru}
{\mathbf{u_\textbf{in}}}=\prescript{b}{}{\mathbf{f_p}}=\left[\, T\,\cos{\mu}\:\:\:\: T\,\sin{\mu}\,\right]^T.
\end{equation}
Considering the complete dynamics in (\ref{eq:states})-(\ref{eq:sys_y}), the inner error dynamics can be written in the form described in (\ref{eq:xdot})-(\ref{eq:y}):
\begin{equation}\label{eq:inner_xdot}
\renewcommand{\arraystretch}{1.5}
    \dot{\mathbf{x}}_\textbf{in} =
    \begin{bmatrix}
          \dot{e}_x\\
         -g + \frac{\prescript{i}{}{f_{a_x}}}{m}-\ddot{x}_d\\
        \dot{e}_\theta\\
        \frac{\tau_a}{j_y} - \ddot{\theta}_d
    \end{bmatrix}+
    \begin{bmatrix}
        0 & 0\\
        \frac{\cos{\theta}}{m} & \frac{\sin{\theta}}{m}\\
        0 & 0\\
        0 & \frac{l}{j_y}
    \end{bmatrix}\,\mathbf{u_\textbf{in}}\,,
\end{equation}
\begin{equation}\label{eq:inner_y}
     {\mathbf{y_\textbf{in}}} = \left[\,e_x\:\:\:\:e_\theta\,\right]^T\,.
\end{equation}
\begin{proposition}\label{prop:rel_deg}
    The inner-loop error dynamics described by Eq. (\ref{eq:inner_xdot}) with output given by (\ref{eq:inner_y}) have a well-defined vector relative degree ${\mathbf{r_{in}}}=\{2,2\}$ on the set $\{{\mathbf{x_\textbf{in}}}\in{\mathbb{R}}^4:|\theta|<\pi/2\}$.
\end{proposition}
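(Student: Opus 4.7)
The plan is to verify Isidori's two conditions for a well-defined vector relative degree $\{2,2\}$ by direct computation of Lie derivatives on the drift and input vector fields read off from (\ref{eq:inner_xdot}). Let $\mathbf{f}({\mathbf{x_\textbf{in}}})$ be the drift (the first column in (\ref{eq:inner_xdot})), let $\mathbf{g}_1,\mathbf{g}_2$ be the two columns of the input matrix, and let $h_1 = e_x$, $h_2 = e_\theta$. The first step is to check that each output must be differentiated at least twice before a control input appears: since rows $1$ and $3$ of the input matrix in (\ref{eq:inner_xdot}) are identically zero, $\mathcal{L}_{\mathbf{g}_j} h_i = 0$ for every $i,j\in\{1,2\}$, so the relative degree of each output is at least $2$.

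Next I would compute the second-order time derivatives along $\mathbf{f}$, namely $\mathcal{L}_{\mathbf{f}} h_1 = \dot e_x$ and $\mathcal{L}_{\mathbf{f}} h_2 = \dot e_\theta$, and then the mixed derivatives $\mathcal{L}_{\mathbf{g}_j}\mathcal{L}_{\mathbf{f}} h_i$. Reading off the nonzero entries in rows $2$ and $4$ of the input matrix, the decoupling matrix becomes
\begin{equation}
\boldsymbol{\Lambda}({\mathbf{x_\textbf{in}}}) = \begin{bmatrix}
\dfrac{\cos\theta}{m} & \dfrac{\sin\theta}{m}\\[4pt]
0 & \dfrac{l}{j_y}
\end{bmatrix}.
\end{equation}
This confirms that at least one input does appear at the second differentiation of each output, so $r_1 = r_2 = 2$; crucially, $r_1 + r_2 = 4$ equals the dimension of ${\mathbf{x_\textbf{in}}}$, ruling out internal zero dynamics at this stage and matching the order requirement used in (\ref{eq:diffeomorphism})--(\ref{eq:intchain}).

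Finally, I would verify nonsingularity of $\boldsymbol{\Lambda}$ on the claimed set. Its determinant is $\det\boldsymbol{\Lambda} = \frac{l\cos\theta}{m\,j_y}$, and since $m$, $j_y$, and $l$ are strictly positive physical parameters, $\boldsymbol{\Lambda}$ is invertible if and only if $\cos\theta\neq 0$, i.e.\ on exactly $\{{\mathbf{x_\textbf{in}}}\in\mathbb{R}^4 : |\theta|<\pi/2\}$ (up to the $\pi$-periodicity, which is the relevant connected component around the nominal upward flight). Combining all three facts yields the claim.

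There is no real obstacle: once the drift/input decomposition is written down explicitly, the calculation is mechanical. The only subtle point worth stating clearly is \emph{why} the singularity at $|\theta|=\pi/2$ is intrinsic and not an artifact of the coordinates chosen: it reflects the physical fact that when the body $x$-axis is horizontal, the thrust component $T\cos\mu$ has no vertical projection and therefore cannot produce the vertical acceleration demanded by the altitude channel $e_x$. This observation also justifies restricting the domain in the statement rather than attempting to extend it.
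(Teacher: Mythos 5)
Your proposal is correct and follows essentially the same route as the paper's own proof: verify Isidori's first condition by noting the zero rows of the input matrix, form the decoupling matrix from the second Lie derivatives, and check that its determinant $\frac{l\cos\theta}{m\,j_y}$ vanishes only at $|\theta|=\pi/2$. The added remarks on $r_1+r_2=4$ and the physical meaning of the singularity are nice but not a different argument.
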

\begin{proof}
Proof in Appendix \ref{app:proof1}.
\end{proof}  

From Proposition \ref{prop:rel_deg}, it follows that the inner-loop tracking error dynamics, defined by the system (\ref{eq:inner_xdot})-(\ref{eq:inner_y}), meet the necessary conditions to be input-output feedback linearized through the application of a coordinate transformation as defined in (\ref{eq:diffeomorphism}) and the static state feedback law
\begin{equation}\label{eq:inner_lin}
    \mathbf{u_\textbf{in}} = -\boldsymbol{\Lambda_{\textbf{in}}}^{-1}(\mathbf{x_\textbf{in}})\,\mathbf{b_\textbf{in}(x_\textbf{in})} + \boldsymbol{\Lambda_{\textbf{in}}}^{-1}(\mathbf{x_\textbf{in}})\,\mathbf{v_\textbf{in}}\,,
\end{equation}

with 
\begin{equation}\label{eq:inner_diff}
\renewcommand{\arraystretch}{1.5}
    \boldsymbol{\Lambda}_{\textbf{in}}(\mathbf{x_\textbf{in}}) = 
    \begin{bmatrix}
        \frac{\cos{\theta}}{m} & \frac{\sin{\theta}}{m}\\
        0 & \frac{l}{j_y}
    \end{bmatrix}\,,
\end{equation}
and
\begin{equation}\label{eq:inner_b}
{\mathbf{b_\textbf{in}(x_\textbf{in})}}=\left[\,-g+\frac{\prescript{i}{}{fa}_x}{m}-\ddot{x}_d\:\:\:\:\frac{\tau_a}{j_y}-\ddot{\theta}_d\,\right]^T\,.
\end{equation}
For this system, the application of the \textit{diffeomorphism} defined by (\ref{eq:attitudekin}) preserves the original coordinates, and the resulting linear system is given by two chains of integrators, which can be written in state-space form as
\begin{equation}
    \boldsymbol{\xi_\textbf{in}} = {\mathbf{x_\textbf{in}}} = \left[\,e_x\:\:\:\:\dot{e}_x\:\:\:\:e_\theta\:\:\:\:\dot{e}_\theta \,\right]^T\,,
\end{equation}
\begin{equation}\label{eq:inner_xidot}
    \boldsymbol{\dot{\xi}_\textbf{in}} =
    \begin{bmatrix}
        0 & 1 & 0 & 0\\
        0 & 0 & 0 & 0\\
        0 & 0 & 0 & 1\\
        0 & 0 & 0 & 0
    \end{bmatrix}\,\boldsymbol{\xi_\textbf{in}}+
    \begin{bmatrix}
        0 & 0\\
        1 & 0\\
        0 & 0\\
        0 & 1
    \end{bmatrix}\,\mathbf{v_\textbf{in}}\,,
\end{equation}
\begin{equation}\label{eq:inner_xiy}
    { \mathbf{y_\textbf{in}}} = \left[\,e_x\:\:\:\:e_\theta\,\right]^T\,.
\end{equation}
 With the inner-loop error dynamics reduced to two decoupled double integrators, designing a stable state feedback law for the vector of transformed control inputs $\mathbf{v_{in}}$ is straightforward, with many well-established linear domain design methods available for that purpose. However, the feedback linearization law requires an accurate knowledge of system parameters, namely its MCI properties, and of the external forces and moments acting on the vehicle, as seen in (\ref{eq:inner_diff})-(\ref{eq:inner_b}). Potential mismatches between the model and the real system will cause a degradation of the double integrator dynamics and may lead to instability. In our work, we have decided to assume an accurate knowledge of the gravitational acceleration and of the MCI properties of the vehicle, while estimating in real-time the aerodynamic forces and moment acting on the vehicle with a model-free adaptive approach.

 Firstly, let us respectively define $\hat{f}_{a_x}$ and $\hat{\tau}_a$ as the estimates on the vertical aerodynamic force and on the aerodynamic pitching moment. By using the estimates directly on the input-output feedback linearization (\ref{eq:inner_lin}) with
 \begin{equation}
 {\mathbf{b_{in}(x_{in})}}=\left[\,-g+\frac{\prescript{i}{}{\hat{f}a}_x}{m}-\ddot{x}_d\:\:\:\:\frac{\hat{\tau}_a}{j_y}-\ddot{\theta}_d\,\right]^T\,,
 \end{equation}
 the state-space representation of the linear system in (\ref{eq:inner_xidot}) reshapes into
 \begin{equation}\label{eq:xi_in_perturbed}
     \boldsymbol{\dot{\xi}_\textbf{in}} =
    \begin{bmatrix}
        0 & 1 & 0 & 0\\
        0 & 0 & 0 & 0\\
        0 & 0 & 0 & 1\\
        0 & 0 & 0 & 0
    \end{bmatrix}\,\boldsymbol{\xi_\textbf{in}}+
    \begin{bmatrix}
        0 & 0\\
        1 & 0\\
        0 & 0\\
        0 & 1
    \end{bmatrix}\,\left(\mathbf{v_\text{in}}+
    \begin{bmatrix}
       \frac{\prescript{i}{}{\tilde{f}_{a_x}}}{m}\\[5pt]
       \frac{{\tilde{\tau}_a}}{j_y}
    \end{bmatrix}\right)\,,
 \end{equation}
in which $\tilde{f}_{a_x} = f_{a_x} -\hat{f}_{a_x}$ and $\tilde{\tau}_a = \tau_a - \hat{\tau}_a$ are the estimation errors. Looking at system (\ref{eq:xi_in_perturbed}), we see that, by considering an error between the aerodynamic estimates and the true values, the input-output linearized system can be seen as a pair of double integrators with an external perturbation that depends on the estimation errors. 

Considering the decoupling of the system, the control law for the transformed inputs can be designed independently for each chain of integrators, which have a similar generic structure given by 
\begin{equation}\label{eq:gen_lin}
    \dot{\xi}_1 = \xi_2\,,\hspace{10pt}
    \dot{\xi}_2 = v+\tilde{d}\,,
\end{equation}
where $v$ is the transformed control input and $\tilde{d}$ represents the external disturbance to the double integrator, given by the estimation error. In order to design a stable controller that is able to reject the disturbance by estimating it, and thus obtain adaptation laws for the aerodynamic quantities used in feedback linearization, the nonlinear adaptive backstepping method \cite{kokotovic} is used.

\begin{lemma}\label{eq:lemma_inner}
    Given the linear system (\ref{eq:gen_lin}), let $k_1$, $k_2$, and $\gamma$ be constant positive gains, the feedback law
\begin{equation}\label{eq:feedback_law}
    v = -\left(1+k_1k_2\right)\xi_1-\left(k_1+k_2\right)\xi_2\,,
\end{equation}
 in conjunction with the following dynamics for the disturbance
 \begin{equation}\label{eq:adapt_law}
     \dot{\tilde{d}} = -\gamma\left(k_1\xi_1 +\xi_2\right)\,,
 \end{equation}
yield a globally asymptotically stable origin of the closed-loop system.
 
\end{lemma}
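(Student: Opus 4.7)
The plan is to follow the standard adaptive backstepping recipe and close with a LaSalle-type invariance argument on a composite Lyapunov function. First I would introduce backstepping coordinates $z_1 = \xi_1$ and $z_2 = \xi_2 + k_1 \xi_1$, where the latter is motivated by treating $\xi_2$ as a virtual input in the $\dot{\xi}_1$ equation with stabilizing value $-k_1 \xi_1$. Substituting the feedback law (\ref{eq:feedback_law}) and collecting terms should put the closed loop into the skew-coupled normal form
\begin{equation*}
\dot{z}_1 = -k_1 z_1 + z_2, \qquad \dot{z}_2 = -z_1 - k_2 z_2 + \tilde{d},
\end{equation*}
which is what backstepping is designed to produce and which is the key algebraic step; the coefficients $1+k_1k_2$ and $k_1+k_2$ in the control law are exactly those required for this cancellation.

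Next I would propose the composite Lyapunov candidate
\begin{equation*}
V = \tfrac{1}{2} z_1^2 + \tfrac{1}{2} z_2^2 + \tfrac{1}{2\gamma} \tilde{d}^{\,2},
\end{equation*}
which is positive definite and radially unbounded in $(z_1, z_2, \tilde{d})$. Differentiating along the closed-loop trajectories, the cross terms $\pm z_1 z_2$ produced by the skew structure cancel, leaving
\begin{equation*}
\dot{V} = -k_1 z_1^2 - k_2 z_2^2 + \tilde{d}\!\left(z_2 + \tfrac{1}{\gamma} \dot{\tilde{d}}\right).
\end{equation*}
Because $z_2 = k_1 \xi_1 + \xi_2$, the adaptation law (\ref{eq:adapt_law}) is precisely the choice that annihilates the bracket, giving $\dot{V} = -k_1 z_1^2 - k_2 z_2^2 \le 0$. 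This already delivers uniform boundedness of $(z_1, z_2, \tilde{d})$ and hence of $(\xi_1, \xi_2, \tilde{d})$.

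The main obstacle, typical of adaptive schemes, is that $\dot{V}$ carries no direct penalty on $\tilde{d}$ and is therefore only negative semidefinite, so Lyapunov stability is immediate but asymptotic convergence of the parameter error is not. To close this gap I would invoke LaSalle's invariance principle: radial unboundedness of $V$ together with $\dot{V} \le 0$ guarantees precompact trajectories that approach the largest invariant subset of $\{(z_1, z_2, \tilde{d}) : z_1 = z_2 = 0\}$. On that set $\dot{z}_2$ collapses to $\tilde{d}$, and invariance forces $\tilde{d} = 0$, so the only invariant point is the origin. Since the change of coordinates $(\xi_1, \xi_2, \tilde{d}) \mapsto (z_1, z_2, \tilde{d})$ is a global linear diffeomorphism, this yields global asymptotic stability of the closed-loop origin as claimed.
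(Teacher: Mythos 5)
Your proposal is correct and follows essentially the same route as the paper's own proof: the same backstepping change of coordinates ($z_2=\xi_2+k_1\xi_1$, which the paper writes as $\alpha_2=\xi_2-\beta$ with $\beta=-k_1\alpha_1$), the same composite Lyapunov function $V=\tfrac12 z_1^2+\tfrac12 z_2^2+\tfrac1{2\gamma}\tilde d^{\,2}$, the same cancellation yielding $\dot V=-k_1z_1^2-k_2z_2^2$, and the same LaSalle closing argument. If anything, your invariance step is slightly more complete than the paper's, since you explicitly show that invariance of the set $\{z_1=z_2=0\}$ forces $\tilde d=0$ (via $\dot z_2=\tilde d$), which is needed to conclude asymptotic stability of the origin of the full three-dimensional state including the estimation error.
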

\begin{proof}
Proof in appendix \ref{app:lemma1}.
\end{proof}

Using Lemma \ref{eq:lemma_inner}, more precisely Eq. (\ref{eq:feedback_law}), the feedback law for the virtual input $\mathbf{v_\textbf{in}}$ can be defined as 
\begin{equation}\label{eq:virtual_input_law}
    \mathbf{v_\textbf{in}}=\begin{bmatrix}
         -\left(1+k_{1_x}k_{2_x}\right)e_x-\left(k_{1_x}+k_{2_x}\right)\dot{e}_x\\[5pt]
         -\left(1+k_{1_\theta}k_{2_\theta}\right)e_\theta-\left(k_{1_\theta}+k_{2_\theta}\right)\dot{e}_\theta
    \end{bmatrix}\,,
\end{equation}
where  $k_{1_x}$, $k_{2_x}$, $k_{1_\theta}$, and $k_{2_\theta}$ are the control gains. Similarly, using Lemma \ref{eq:lemma_inner}, particularly Eq. (\ref{eq:adapt_law}), adaptation laws can be derived for the vertical aerodynamic force, $\prescript{i}{}{\hat{f}}_{a_x}$, and for the aerodynamic moment, $\hat{\tau}_a$. Under the assumption that the true values are slowly time-varying, i.e., $\prescript{i}{}{\dot{f}}_{a_x} \approx 0$ and $\dot{\tau}_a \approx 0$, we have that
\begin{equation}
    \prescript{i}{}{\dot{\hat{f}}}_{a_x} = -\tilde{f}_{a_x}\frac{\dot{m}}{m} +\gamma_xm\left(k_{1_x}e_x+\dot{e}_x\right)\,,
\end{equation}
\begin{equation}
        \dot{\hat{\tau}}_a = -\tilde{\tau}_{a}\frac{\dot{j}_y}{j_y} +\gamma_\theta\,j_y\left(k_{1_\theta}e_\theta+\dot{e}_\theta\right)\,,
\end{equation}
where $\gamma_x$ and $\gamma_\theta$ are the adaptation gains. Neglecting the mass and inertia time derivatives, i.e, considering the reasonable assumption that $\dot{m}/m \ll 1$ and $\dot{j}_y/j_y \ll 1$ the following adaptations law are obtained:
\begin{equation}\label{eq:adaptlaw_fa}
     \prescript{i}{}{\hat{f}}_{a_x}(t) = \prescript{i}{}{\hat{f}}_{a_x}(0) +\gamma_x\int_{0}^{t} m\left(k_{1_x}e_x+\dot{e}_x\right) \,d\tau\,,
\end{equation}
\begin{equation}\label{eq:adaptlaw_ta}
      \hat{\tau}_a(t) = \hat{\tau}_a(0) +\gamma_\theta\int_{0}^{t} j_y\left(k_{1_\theta}e_\theta+\dot{e}_\theta\right) \,d\tau\,.
\end{equation}

\begin{theorem}\label{T:inner} Let the inner-loop error dynamics be described by Eqs. (\ref{eq:innerstates}) and (\ref{eq:inner_xdot}), and have its output equation given by (\ref{eq:inner_y}). Under slowly time-varying aerodynamic loads and negligible mass and inertia time derivatives, the closed-loop system that results from applying the feedback law defined by (\ref{eq:inner_lin}) using the virtual input (\ref{eq:virtual_input_law}) and the real-time estimates given by (\ref{eq:adaptlaw_fa})-(\ref{eq:adaptlaw_ta}) is regionally asymptotically stable, with the stability region defined by $\{{\mathbf{x_\textbf{in}}}\in{\mathbb{R}}^4:|\theta|<\pi/2\}$.
\end{theorem}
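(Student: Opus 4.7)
The plan is to assemble Theorem \ref{T:inner} as a direct combination of Proposition \ref{prop:rel_deg} and Lemma \ref{eq:lemma_inner}, with the only delicate point being the management of the linearization region. First, I would invoke Proposition \ref{prop:rel_deg} to guarantee that on the open set $\{|\theta|<\pi/2\}$ the decoupling matrix $\boldsymbol{\Lambda}_{\textbf{in}}$ in (\ref{eq:inner_diff}) is non-singular, so the static state-feedback law (\ref{eq:inner_lin}) is well-defined. Substituting this law, with the estimated aerodynamics in $\mathbf{b_\textbf{in}}$, into (\ref{eq:inner_xdot}) and grouping the residual aerodynamic terms as estimation errors reproduces exactly the perturbed linear system (\ref{eq:xi_in_perturbed}). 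This system decouples into two copies of the template (\ref{eq:gen_lin}): the vertical channel with $(\xi_1,\xi_2,\tilde d,v)=(e_x,\dot e_x,\,\prescript{i}{}{\tilde f}_{a_x}/m,\,v_{\text{in},1})$, and the attitude channel with $(\xi_1,\xi_2,\tilde d,v)=(e_\theta,\dot e_\theta,\,\tilde\tau_a/j_y,\,v_{\text{in},2})$. In each channel, (\ref{eq:virtual_input_law}) is precisely the feedback (\ref{eq:feedback_law}) required by Lemma \ref{eq:lemma_inner}.

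Next, I would verify that the integral adaptation laws (\ref{eq:adaptlaw_fa})--(\ref{eq:adaptlaw_ta}) reproduce the disturbance dynamics (\ref{eq:adapt_law}) demanded by the lemma. For the vertical channel, differentiating $\tilde d_x=(\prescript{i}{}{f}_{a_x}-\prescript{i}{}{\hat f}_{a_x})/m$ and applying the standing hypotheses $\prescript{i}{}{\dot f}_{a_x}\approx 0$ and $\dot m/m\approx 0$ yields $\dot{\tilde d}_x=-\prescript{i}{}{\dot{\hat f}}_{a_x}/m$; inserting the time derivative of (\ref{eq:adaptlaw_fa}) gives $\dot{\tilde d}_x=-\gamma_x(k_{1_x}e_x+\dot e_x)$, which matches (\ref{eq:adapt_law}) exactly with adaptation gain $\gamma_x$. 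An identical computation on $\tilde d_\theta=\tilde\tau_a/j_y$, under $\dot\tau_a\approx 0$ and $\dot j_y/j_y\approx 0$, produces the same template with gain $\gamma_\theta$. At this point Lemma \ref{eq:lemma_inner} applies channel-by-channel, yielding asymptotic convergence of $(e_x,\dot e_x,\tilde d_x)$ and $(e_\theta,\dot e_\theta,\tilde d_\theta)$ to the origin.

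The main obstacle will be translating the unconditional conclusion of Lemma \ref{eq:lemma_inner} back into the original coordinates, where the feedback linearization (\ref{eq:inner_lin}) is defined only on $\{|\theta|<\pi/2\}$. Although the transformed dynamics are globally asymptotically stable, the control law loses meaning when $\theta$ reaches $\pm\pi/2$, so convergence in the transformed coordinates can only be claimed along trajectories that never leave the linearization region. The standard remedy, which I would adopt, is to exploit the Lyapunov function $V$ constructed in the proof of Lemma \ref{eq:lemma_inner}: since $\dot V\le 0$ along the closed-loop trajectories of each channel, every sublevel set $\{V\le c\}$ is positively invariant, and any initial condition belonging to a sublevel set entirely contained in $\{|\theta|<\pi/2\}$ gives rise to a trajectory that stays in the linearization region and converges to the origin. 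The region of attraction is therefore the union of all such sublevel sets, which justifies the qualifier \emph{regional} in the statement. Finally, since $\theta=e_\theta+\theta_d$, I would close the argument by assuming the pitch reference $\theta_d(t)$ is bounded away from $\pm\pi/2$, so that the time-varying constraint on $e_\theta$ implied by the region condition is nontrivial and compatible with the sublevel-set argument above.
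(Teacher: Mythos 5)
Your proof follows essentially the same route as the paper's: Proposition \ref{prop:rel_deg} guarantees that the feedback law (\ref{eq:inner_lin}) is well defined on $\{|\theta|<\pi/2\}$, the closed loop reduces to the perturbed double integrators (\ref{eq:xi_in_perturbed}), and Lemma \ref{eq:lemma_inner} is applied channel-by-channel after checking that the adaptation laws (\ref{eq:adaptlaw_fa})--(\ref{eq:adaptlaw_ta}) reproduce the disturbance dynamics (\ref{eq:adapt_law}) under the slowly-varying-load and negligible $\dot m/m$, $\dot j_y/j_y$ assumptions. Your closing sublevel-set invariance argument, together with the explicit requirement that $\theta_d$ remain bounded away from $\pm\pi/2$, is a genuine refinement rather than a different approach: the paper's proof asserts regional asymptotic stability without explaining why trajectories stay inside the set where the linearization is valid, so your version is the more complete one on that point.
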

\begin{proof}
    From Proposition \ref{prop:rel_deg}, the input-output feedback linearization is valid on the set $\{{\mathbf{x_\textbf{in}}}\in{\mathbb{R}}^4:|\theta|<\pi/2\}$. By applying the input-output linearization through the feedback law defined by (\ref{eq:inner_lin}) with the real-time estimates given by (\ref{eq:adaptlaw_fa}) and (\ref{eq:adaptlaw_ta}), the resulting closed-loop dynamics are of the type (\ref{eq:xi_in_perturbed}). Under the stated assumptions, the real-time estimates yield equivalent estimation error dynamics and virtual input (\ref{eq:virtual_input_law}) to the ones defined in Lemma \ref{eq:lemma_inner}. Hence, the origin of the closed-loop error dynamics is asymptotically stable on the set $\{{\mathbf{x_\textbf{in}}}\in{\mathbb{R}}^4:|\theta|<\pi/2\}$, yielding regionally asymptotically stable closed-loop inner dynamics.
    
\end{proof}

From Theorem \ref{T:inner}, we have that the inner-loop dynamics, concerning the vertical and angular motion, are now almost-globally stabilized, with the only non-stable region happening for an horizontal orientation, given the way the reference frames were defined. Considering the class of vehicles under study, this orientation is deemed outside of the nominal range of operation, meaning that the derived inner-loop controller is stable in the region of interest. Moreover, looking at the input-output feedback linearization given by (\ref{eq:inner_lin}) and the virtual control input in (\ref{eq:virtual_input_law}), it is possible to conclude that the position and orientation references given to the inner-loop controller shall have well-defined second derivatives to serve as feedforward control inputs. Insofar as this condition is met, the inner-loop controller is able to track generic references.  

Although slowly time-varying aerodynamic loads have been assumed to derive the adaptation laws and prove the stability of the resulting closed-loop system, the dynamics of the estimation process can be made arbitrarily fast through the adjustment of the adaptation gains, allowing to track the time-varying aerodynamic loads that the vehicle is subject to due to changing environmental conditions and velocity.

Finally, it is necessary to retrieve the thrust vector magnitude, $T$, and gimbal angle, $\mu$, from the commanded inner-loop input, $\mathbf{u}_\textbf{in}$. Looking at the input definition in (\ref{eq:inneru}), it is straightforward to obtain
\begin{subequations}
\begin{equation}
    \mu = \arctan_2\left(\frac{{ u_{\text{in}_2}}}{{ u_{\text{in}_1}}}\right)\,,
\end{equation}
\begin{equation}
  T = \frac{{ u_{\text{in}_1}}}{\cos{\mu}}\,.
\end{equation}
\end{subequations} 

 \subsection{Outer-loop}

 With the inner-loop stabilized, a feedback law for the outer-loop shall be designed with the objective of controlling the downrange motion of the vehicle, $z$, by providing appropriate attitude references, $\theta_d$, to the inner-loop controller. To do so, the outer-loop dynamics must first be characterized, considering the closed-loop stabilized inner dynamics. In this work, and similarly to what was presented for quadrotors in \cite{Martins}, we have decided to control the horizontal motion through the \textit{zero dynamics} of the inner-stabilized system.

 According to \cite{Isidori}, to obtain the \textit{zero dynamics} of the system, we shall solve the \textit{Problem of zeroing the output}. It is important to note that a zero output of the inner-stabilized system corresponds to a null tracking error on the vertical position and orientation. For $\mathbf{y_\textbf{in}} = \mathbf{0}$, it is straightforward to verify that $\boldsymbol{\xi}_\textbf{in}=\mathbf{0}$ and, from (\ref{eq:intchain}), $\mathbf{v_\textbf{in}}=\mathbf{0}$.  Given (\ref{eq:input_feedback}), the input vector $\mathbf{u^{*}_\textbf{in}}$ that solves the \textit{Problem of zeroing the output} is obtained from
 \begin{equation}
     \mathbf{0} = \mathbf{b_\textbf{in}(x^{*}_\textbf{in})} + \boldsymbol{\Lambda_{\textbf{in}}}(\mathbf{x^{*}_\textbf{in}})\,\mathbf{u^{*}_\textbf{in}}\,,
 \end{equation}
yielding
 \begin{equation}\label{eq:zero_input}
 \renewcommand{\arraystretch}{2}
 {\mathbf{u^{*}_\textbf{in}}} = \begin{bmatrix}
        \frac{1}{\cos{\theta}}\left( mg -\prescript{i}{}{\hat{f}_{a_x}}+m\ddot{x}_d +\frac{\sin{\theta}}{l}\left(\hat{\tau}_a -j_y\ddot{\theta}_d \right)\right)\\
          l^{-1}\left(j_y\ddot{\theta}_d - \hat{\tau}_a\right) 
     \end{bmatrix}\,.
 \end{equation}
 Looking at the complete system, described in Eqs. (\ref{eq:states})-(\ref{eq:sys_y}), the outer-loop dynamics are given by
 \begin{equation}\label{eq:outer_dynamics}
     \ddot{z} = \frac{\prescript{i}{}{f_{a_z}}}{m} - \frac{\sin{\theta}}{m}{ u_{\text{in}_1}} +  \frac{\cos{\theta}}{m}{u_{\text{in}_2}}\,,
 \end{equation}
 where $ u_{\text{in}_1}$ and $u_{\text{in}_2}$ are the components of the input vector $\mathbf{u_\textbf{in}}$. By replacing input vector (\ref{eq:zero_input}) in the outer-loop dynamics (\ref{eq:outer_dynamics}), the \textit{zero dynamics} describing the horizontal motion in the presence of stabilized inner dynamics are obtained:
 \begin{equation}\label{eq:zero_dynamics}
          \ddot{z} = \frac{\prescript{i}{}{f_{a_z}}}{m} + a\,\tan{\theta} +  \frac{b}{\cos{\theta}}\,,
\end{equation}
where
 \begin{equation*}
          a = -g+\frac{\prescript{i}{}{f_{a_x}}}{m}-\ddot{x}_d\,,\hspace{10pt} b = \frac{j_y\ddot{\theta}_d-\tau_a}{m\,l}\,.
\end{equation*}
An outer-loop controller is required to control these dynamics through the pitch angle $\theta$. Once again, the dynamics will be written in the error coordinates. Let
\begin{equation}\label{eq:outerstates}
    {\mathbf{x_\textbf{out}}}=\left[\,e_z\:\:\:\:\dot{e}_z\,\right]^T
\end{equation}
denote the state vector for the outer dynamics and
\begin{equation}   \label{eq:outerinput} 
     u_\text{out}= a\,\tan{\theta} +  \frac{b}{\cos{\theta}}
\end{equation}
be the outer-loop control input. The outer error dynamics can be written in the form (\ref{eq:xdot})-(\ref{eq:y}):
\begin{equation}\label{eq:outer_xdot}
\renewcommand{\arraystretch}{2}
    \dot{\mathbf{x}}_\textbf{out} = \begin{bmatrix}
         \dot{e}_z\\
         \frac{\prescript{i}{}{f_{a_z}}}{m}-\ddot{z}_d
    \end{bmatrix}+
    \begin{bmatrix}
        0\\
        1
    \end{bmatrix}\,u_\text{out}\,,
\end{equation}
\begin{equation} \label{eq:outer_y}
    y_\text{out} =  e_z\,.
\end{equation}

\begin{proposition}\label{prop:out}
   The outer-loop error dynamics described by Eq. (\ref{eq:outer_xdot}) and with output given by (\ref{eq:outer_y}) have a well-defined relative degree $r_\text{out} = 2$ for any $\mathbf{x_\textbf{out}}$.
\end{proposition}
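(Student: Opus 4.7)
The plan is to verify the definition of relative degree for a single-input single-output system applied directly to (\ref{eq:outer_xdot})--(\ref{eq:outer_y}). Identifying the drift and input vector fields
\begin{equation*}
\mathbf{f_\text{out}}(\mathbf{x_\textbf{out}}) = \begin{bmatrix} \dot{e}_z \\ \frac{\prescript{i}{}{f_{a_z}}}{m}-\ddot{z}_d \end{bmatrix}, \quad \mathbf{g_\text{out}}(\mathbf{x_\textbf{out}}) = \begin{bmatrix} 0 \\ 1 \end{bmatrix},
\end{equation*}
and the output map $h_\text{out}(\mathbf{x_\textbf{out}}) = e_z$, I need only compute successive Lie derivatives of $h_\text{out}$ along $\mathbf{f_\text{out}}$ and check at which order the input vector field $\mathbf{g_\text{out}}$ first appears with a non-vanishing coefficient.

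First I would compute $\mathcal{L}_{\mathbf{g_\text{out}}} h_\text{out} = \partial h_\text{out}/\partial \mathbf{x_\textbf{out}} \cdot \mathbf{g_\text{out}}$, which is the inner product of $[1\;\;0]$ with $[0\;\;1]^T$ and therefore equals $0$ identically. Since this Lie derivative vanishes everywhere, the relative degree is at least $2$. Next I would compute $\mathcal{L}_{\mathbf{f_\text{out}}} h_\text{out} = \dot{e}_z$, followed by $\mathcal{L}_{\mathbf{g_\text{out}}} \mathcal{L}_{\mathbf{f_\text{out}}} h_\text{out} = [0\;\;1]\cdot[0\;\;1]^T = 1$. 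Since this is a non-zero constant, the input appears explicitly in the second time-derivative of the output independently of the state, which by Isidori's definition \cite{Isidori} establishes a well-defined relative degree of exactly $r_\text{out}=2$.

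Finally, I would emphasise that the coefficient $\mathcal{L}_{\mathbf{g_\text{out}}} \mathcal{L}_{\mathbf{f_\text{out}}} h_\text{out} \equiv 1$ is independent of $\mathbf{x_\textbf{out}}$, so the relative degree condition holds on the entire state space, with no singularities to exclude. Because the system is second-order and the relative degree equals its order, the zero dynamics are trivial and the outer-loop error dynamics are already in normal form, which will be convenient for the LQR synthesis that follows. I do not foresee any real obstacle in this proof: the calculation is a direct inner-product verification, and the only subtlety worth stating explicitly is that the chosen form of the input $u_\text{out}$ in (\ref{eq:outerinput}), together with the constant input distribution in (\ref{eq:outer_xdot}), is exactly what makes the relative degree globally well-defined.
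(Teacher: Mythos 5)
Your proof is correct and follows essentially the same route as the paper's: both verify the SISO relative-degree definition by checking that $\mathcal{L}_{g}h$ vanishes identically while $\mathcal{L}_{g}\mathcal{L}_{f}h$ is a nonzero constant, so the input first appears in the second derivative of the output. You merely spell out the Lie-derivative computations that the paper labels ``straightforward,'' which is a welcome addition rather than a deviation.
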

\begin{proof}
   Proof in Appendix \ref{app:prop2}.
\end{proof}

From Proposition \ref{prop:out}, and in accordance with \cite{Khalil}, the nonlinear system (\ref{eq:outer_xdot})-(\ref{eq:outer_y}) is input-output linearizable. More specifically, the application of the SISO version of the \textit{diffeomorphism} in (\ref{eq:diffeomorphism})
\begin{equation}
\renewcommand{\arraystretch}{1.5}
     \boldsymbol{\xi}_\textbf{out}=\boldsymbol{\phi}_\textbf{out} = \begin{bmatrix}
         h(\mathbf{x_\textbf{out}})\\
        { \mathcal{L}}_{f}h(\mathbf{x_\textbf{out}})
     \end{bmatrix}=
     \begin{bmatrix}
         e_z\\
         \dot{e}_z
     \end{bmatrix} = \mathbf{x_\textbf{out}}\,,
\end{equation}
and of the static state feedback law
\begin{equation} \label{eq:outer_lin}
\begin{split}
    u_\text{out} &= \left({\mathcal{L}}_{g}{\mathcal{L}}_{f}h(\mathbf{x_\textbf{out}})\right)^{-1}\,\left(-{\mathcal{L}}^2_{f}h(\mathbf{x_\textbf{out}})+v_\text{out}\right) =\\ &=\ddot{z}_d-\frac{\prescript{i}{}{f_{a_z}}}{m} + v_\text{out}\,,
    \end{split}
\end{equation}
yields the input-output linearized version of the system, described by a double integrator in the transformed coordinates:
\begin{equation}
\renewcommand{\arraystretch}{1.5}
    \boldsymbol{\xi_\textbf{out}} = {\mathbf{x_\textbf{out}}} = \left[\,e_z\:\:\:\:\dot{e}_z\,\right]^T\,,
\end{equation}
\begin{equation}\label{eq:outer_xidot}
    \boldsymbol{\dot{\xi}_\textbf{out}} =
    \begin{bmatrix}
        0 & 1\\
        0 & 0
    \end{bmatrix}\,\boldsymbol{\xi_\textbf{out}}+
    \begin{bmatrix}
        0\\
        1
    \end{bmatrix}\,v_\text{out}\,,
\end{equation}
\begin{equation}\label{eq:outer_xiy}
     y_\text{out} = e_z\,.
\end{equation}

Similarly to what was obtained for the inner-loop, the outer-loop dynamics are now reduced to a double integrator with virtual input $v_\text{out}$ - a linear system for which many stabilizing static state feedback laws are available. However, for the outer-loop, given that the \textit{zero dynamics} are not the exact dynamics of the system, it is not straightforward to add an equivalent adaptation layer to estimate the horizontal aerodynamic force, $\prescript{i}{}{f_{a_z}}$, to be used in the input-output feedback law (\ref{eq:outer_lin}). Instead, the inner-loop aerodynamic estimates are used to obtain an indirect estimate on that quantity, based on first principles. Following the aerodynamics modeling presented in Section \ref{sec:physical_model}, the indirect estimate on the horizontal aerodynamic force is
\begin{equation}\label{eq:outer_estimate}
    \prescript{i}{}{\hat{f}_{a_z}} = \frac{\cos{\theta}\,\hat{\tau}_a}{SM\,d} - \tan{\theta}\left(\prescript{i}{}{\hat{f}}_{a_x}-\frac{\sin{\theta}\,\hat{\tau}_a}{SM\,d}\right)\,,
\end{equation}
which requires the additional knowledge of the evolution of the position of the center of pressure, $x_{cp}$, which can be stored as function of the angle of attack, $\alpha$, and the velocity in Mach, $M$, to determine the static margin, $SM$. The expression for the indirect estimate is non-singular in the same region as the inner-loop controller in terms of pitch angle, with the addition of a singularity for a null static margin. In any case, the static margin is expected to be negative due to the natural instability of the vehicle and, if null, the singularity can be dealt with as it can only happen instantaneously given the dynamic behavior of the system. Considering this estimate, the input-output linearization then reshapes into
\begin{equation}\label{eq:outer_lin_hat}
    u_\text{out}  = \ddot{z}_d - \frac{\prescript{i}{}{\hat{f}_{a_z}}}{m} + v_\text{out}\,.
\end{equation}
With the input-output linearization completely defined, the Linear Quadratic Regulator (LQR) \cite{friedland}, with additional integral action to deal with steady state-error and to mitigate the effect of perturbations \cite{dossantos}, is used to obtain the feedback law for the outer-loop virtual input $v_\text{out}$. The outer-loop virtual input is defined as
\begin{equation}\label{eq:v_out}
    v_\text{out} = -k_z\,e_z - k_{\dot{z}}\,\dot{e}_z + -k_i\,\zeta_z\,,
\end{equation}
where $k_z$, $k_{\dot{z}}$, and $k_i$ are the gains that result from applying the LQR computation to the fedforward double integrator error dynamics with the additional integral state $\zeta_z$, which satisfies
\begin{equation}
    \dot{\zeta}_z = e_z\,.
\end{equation}
The closed-loop error dynamics that result from applying the input-output linearization in (\ref{eq:outer_lin_hat}) with the horizontal aerodynamic force estimate given by Eq. (\ref{eq:outer_estimate}) and the virtual control input in Eq. (\ref{eq:v_out}) are described as follows:
\begin{equation}
    {\mathbf{e_\textbf{out}}} = \left[\,e_z\:\:\:\:\dot{e}_z\:\:\:\:\zeta_z\,\right]^T\,,
\end{equation}
\begin{equation}\label{eq:outer_edyn}
    \dot{\mathbf{e}}_\textbf{out} = \begin{bmatrix}
        0 & 1 & 0\\
        -k_z & -k_{\dot{z}} & -k_i\\
        1 & 0 & 0
    \end{bmatrix}\,\mathbf{e_\textbf{out}} +
    \begin{bmatrix}
        0\\
        \frac{\prescript{i}{}{\tilde{f}_{a_z}}}{m}\\
        0
    \end{bmatrix}\,,
\end{equation}
where $\prescript{i}{}{\tilde{f}_{a_z}} = \prescript{i}{}{f_{a_z}} - \prescript{i}{}{\hat{f}_{a_z}}$ is the horizontal aerodynamic force estimation error. Similarly to the inner-loop, the outer closed-loop error dynamics can be seen as a stable linear system with an input disturbance given by the estimation error.
\begin{figure*}
\centering
\includegraphics{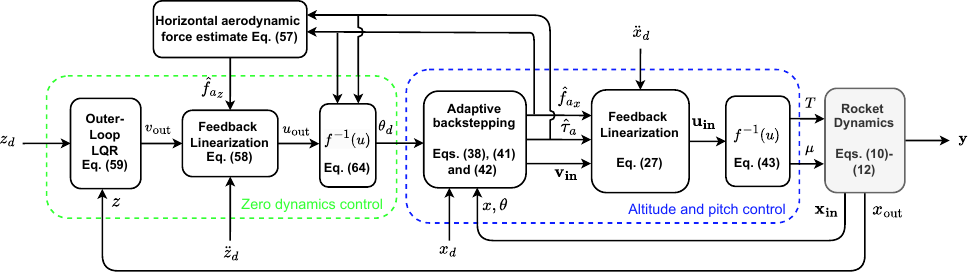}
\caption{\bf{Diagram of the proposed inner-outer control architecture.}}
\label{fig:arch_dia}
\end{figure*}\\

\begin{theorem}\label{T:outer}
    Under the conditions stated in Theorem \ref{T:inner}, and considering the aerodynamic moment modeling assumption, the origin of the outer closed-loop error dynamics given by Eq. (\ref{eq:outer_edyn}) is regionally asymptotically stable for $|\theta|<\pi/2$. 
\end{theorem}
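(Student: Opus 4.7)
The plan is to view the outer closed-loop error dynamics (\ref{eq:outer_edyn}) as a Hurwitz LTI system driven by the perturbation $\prescript{i}{}{\tilde{f}_{a_z}}/m$, and to reduce the entire theorem to showing that this perturbation vanishes asymptotically. The Hurwitz property of the system matrix follows by construction from the LQR synthesis applied to the integrator-augmented double integrator, so the nontrivial content concentrates on propagating the inner-loop estimation guarantees of Theorem \ref{T:inner} to the horizontal aerodynamic channel through the indirect estimate (\ref{eq:outer_estimate}).

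First, I would invoke Theorem \ref{T:inner} on the set $|\theta|<\pi/2$ to conclude that not only the inner tracking errors $e_x,\dot{e}_x,e_\theta,\dot{e}_\theta$ but also the inner estimation errors $\tilde{f}_{a_x}$ and $\tilde{\tau}_a$ converge asymptotically to zero, which is the full conclusion of the Lyapunov argument sketched in Lemma \ref{eq:lemma_inner} applied to each decoupled chain. Second, I would use the aerodynamic moment modeling assumption $\tau_a = \prescript{b}{}{f_{a_z}}\,SM\,\overline{d}$ together with the rotation (\ref{eq:rotation}) to observe that the \emph{true} horizontal force $\prescript{i}{}{f_{a_z}}$ satisfies the same algebraic expression as (\ref{eq:outer_estimate}), with $(\hat{f}_{a_x},\hat{\tau}_a)$ replaced by their true values. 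Subtracting the two expressions yields
\begin{equation*}
\prescript{i}{}{\tilde{f}_{a_z}} = \frac{\cos\theta}{SM\,\overline{d}}\tilde{\tau}_a - \tan\theta\left(\prescript{i}{}{\tilde{f}_{a_x}} - \frac{\sin\theta}{SM\,\overline{d}}\tilde{\tau}_a\right),
\end{equation*}
whose coefficients are bounded on the region $|\theta|<\pi/2$ with non-vanishing static margin, so $\prescript{i}{}{\tilde{f}_{a_z}}\to 0$ as $t\to\infty$.

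Finally, rewriting (\ref{eq:outer_edyn}) as $\dot{\mathbf{e}}_\textbf{out} = A\,\mathbf{e}_\textbf{out} + B\,\prescript{i}{}{\tilde{f}_{a_z}}/m$ with $A$ Hurwitz by LQR design and $B$ constant, I would invoke the standard linear input-to-state stability property: a Hurwitz LTI system driven by an input tending to zero has state tending to zero. Combined with Lyapunov stability of the unperturbed linear part, this yields regional asymptotic stability of the origin of (\ref{eq:outer_edyn}) on the set $|\theta|<\pi/2$, as claimed.

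The main obstacle I anticipate is the cascade nature of the overall interconnection: the reference $\theta_d$ used by the inner loop is generated dynamically by the outer loop through the inversion of (\ref{eq:outerinput}), so the clean convergence statement of Theorem \ref{T:inner} must be valid under a dynamically produced, rather than a~priori chosen, reference. Justifying this rigorously requires verifying that $\theta_d$ inherits enough smoothness from $z_d$ and that $\theta$ remains in the region of validity throughout the transient. A secondary concern is the instantaneous singularity at $SM=0$, which the text already dismisses as non-generic in the expected negative static-margin regime and which does not affect the asymptotic argument.
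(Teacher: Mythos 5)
Your proposal follows essentially the same route as the paper's own proof: establish the Hurwitz property of the LQR-closed loop, invoke input-to-state stability of the driven LTI system, and reduce everything to showing $\prescript{i}{}{\tilde{f}_{a_z}}\to 0$ via Theorem \ref{T:inner} and the aerodynamic moment model. Your explicit subtraction showing that $\prescript{i}{}{\tilde{f}_{a_z}}$ is a $\theta$-dependent linear combination of $\prescript{i}{}{\tilde{f}_{a_x}}$ and $\tilde{\tau}_a$ is a step the paper only asserts implicitly, and your caveats (the dynamically generated $\theta_d$, and the fact that the combination's coefficients are bounded only on compact subsets of $|\theta|<\pi/2$) are fair points the paper defers to Theorem \ref{T:cascade} or leaves unaddressed.
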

\begin{proof}
    From Proposition \ref{prop:out}, the input-output feedback linearization is well-defined for all $\mathbf{x_\textbf{out}}$. Thus, considering a perfect estimation of the horizontal aerodynamic force, $\prescript{i}{}{f_{a_z}}$, the application of the input-output linearization described in (\ref{eq:outer_lin_hat}) with virtual control given by Eq. (\ref{eq:v_out}) results in a Hurwitz characteristic polinomial of the double integrator in closed-loop. Hence, the linear state variables $\boldsymbol{\xi}_\textbf{out}$ converge exponentially to the desired horizontal position trajectory for any initial state $\boldsymbol{\xi}_\textbf{out}(0)$. Moreover, from the Hurwitz condition, the closed-loop system (\ref{eq:outer_edyn}) is input-to-state stable, meaning that a bounded input to the system (\ref{eq:outer_edyn}), i.e., a bounded estimation error of the horizontal aerodynamic force, results in bounded state trajectories, and convergence of $\prescript{i}{}{\tilde{f}_{a_z}}$ to zero implies the outer-loop tracking error converging to zero. From Theorem \ref{T:inner}, and under the conditions therein stated, the inner-loop estimation errors $\prescript{i}{}{\tilde{f}_{a_x}}$ and $\tilde{\tau}_a$ converge to zero, which implies, under the aerodynamic moment modeling assumption, the convergence of $\prescript{i}{}{\tilde{f}_{a_z}}$ to zero. Furthermore, the pitch angle $\theta$ is asymptotically stabilized by the inner-loop for the region defined by $|\theta|<\pi/2$. Given that $\boldsymbol{\xi}_\textbf{out} = \mathbf{x_\textbf{out}}$, the horizontal motion tracking dynamics (\ref{eq:outer_edyn}) are asymptotically stable for $|\theta|<\pi/2$.
\end{proof}

From Theorem \ref{T:outer}, we have that the outer-loop dynamics, obtained through the \textit{zero dynamics} of the inner-stabilized system, are now stable and able to track generic horizontal position references $z_d$. Figure \ref{fig:arch_dia} schematizes the resulting control system. While the inertial position references are externally defined according to the mission scenario, the pitch angle references $\theta_d$ must be determined by the outer-loop control in order to track the desired horizontal position. Hence, the pitch angle references must be extracted from the outer-loop input in Eq. (\ref{eq:outerinput}). Rewriting the outer-loop input as 
\begin{equation}\label{eq:newouterinput}
     u_\text{out}= \hat{a}\,\tan{\theta_d} +  \frac{\hat{b}}{\cos{\theta_d}}\,,
\end{equation}
the reference pitch angle $\theta_d$ can be obtained:
\begin{equation}\label{eq:pitch_ref}
    \theta_d = \cos^{-1}\left(\frac{\hat{b}}{\sqrt{\hat{a}^2+{u_\text{out}}^2}}\right) + \arctan_2\left(\hat{a},-u_\text{out}\right)\,.
\end{equation}

In order to study the stability of the complete system, the interconnection of the inner and outer loops must be analyzed. To do so, the rewritten outer-loop input in Eq. (\ref{eq:newouterinput}) shall be included in the outer error dynamics system detailed in Eq. (\ref{eq:outer_xdot}), yielding
\begin{equation}
\renewcommand{\arraystretch}{1.5}
    \dot{{\mathbf{x}}}_\textbf{out} = \begin{bmatrix}
         \dot{e}_z\\
         \frac{\prescript{i}{}{f_{a_z}}}{m}-\ddot{z_d}
    \end{bmatrix}+
    \begin{bmatrix}
        0\\
        1
    \end{bmatrix}\,u_\text{out}+
    \begin{bmatrix}
        0\\
        \delta
    \end{bmatrix}\,,
\end{equation}
in which $\delta$ corresponds to the impact of the pitch angle tracking error on the outer-loop dynamics and is given by
\begin{equation}\label{eq:delta}
     \delta =  \left(a\,\tan{\theta} +  \frac{b}{\cos{\theta}}\right)-u_\text{out}\,.
\end{equation}
Moreover, the outer closed-loop error dynamics (\ref{eq:outer_edyn}) reshape into:
\begin{equation}\label{eq:outer_edyn_delta}
\renewcommand{\arraystretch}{1.5}
    \dot{\mathbf{e}}_\textbf{out} = \begin{bmatrix}
        0 & 1 & 0\\
        -k_z & -k_{\dot{z}} & -k_i\\
        1 & 0 & 0
    \end{bmatrix}\,\mathbf{e_\textbf{out}} +
    \begin{bmatrix}
        0\\
        \frac{\prescript{i}{}{\tilde{f}_{a_z}}}{m}\\
        0
    \end{bmatrix} +
    \begin{bmatrix}
        0\\
        \delta\\
        0
    \end{bmatrix}.
\end{equation}
\begin{theorem} \label{T:cascade}
    Consider the closed-loop system composed by the nonlinear system described by Eq. (\ref{eq:sys_xdot}) and the input-output linearizing controller given by (\ref{eq:inner_lin}) and (\ref{eq:virtual_input_law}). Let the remaining dynamics, corresponding to the horizontal movement dynamics (\ref{eq:outer_xdot}), be input-output linearized and controlled through (\ref{eq:outer_lin_hat}) and (\ref{eq:v_out}). The resulting closed-loop tracking system is regionally asymptotically stable.
\end{theorem}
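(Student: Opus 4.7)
The plan is to prove Theorem~\ref{T:cascade} via a cascade/ISS argument. The inner-loop error dynamics, under the feedback (\ref{eq:inner_lin})--(\ref{eq:virtual_input_law}), are regionally asymptotically stable by Theorem~\ref{T:inner}, and the outer-loop error dynamics (\ref{eq:outer_edyn_delta}) form a linear time-invariant system driven by two perturbation signals, namely $\prescript{i}{}{\tilde{f}_{a_z}}/m$ and $\delta$. Because the LQR gains render the state matrix in (\ref{eq:outer_edyn_delta}) Hurwitz, this linear system is exponentially stable, and hence input-to-state stable with respect to these two inputs. The strategy is therefore to show that both perturbations vanish along the trajectories of the inner loop, and to invoke the converging-input-converging-state property of a Hurwitz LTI system to conclude that $\mathbf{e}_\textbf{out}\to\mathbf{0}$.

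First, I would appeal directly to Theorem~\ref{T:inner} to assert that, provided the attitude reference $\theta_d$ produced by (\ref{eq:pitch_ref}) lies in the admissible set $|\theta_d|<\pi/2$ and is sufficiently smooth to serve as a feedforward, the inner-loop tracking errors $e_x,\dot{e}_x,e_\theta,\dot{e}_\theta$ and the inner estimation errors $\prescript{i}{}{\tilde{f}_{a_x}}$ and $\tilde{\tau}_a$ all decay to zero. By the aerodynamic moment modeling assumption used to derive the indirect estimate (\ref{eq:outer_estimate}), $\prescript{i}{}{\tilde{f}_{a_z}}$ is a continuous, zero-at-zero function of $(\tilde{f}_{a_x},\tilde{\tau}_a,e_\theta)$ on $|\theta|<\pi/2$, so $\prescript{i}{}{\tilde{f}_{a_z}}\to 0$, which handles the first perturbation.

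Second, I would show that the interconnection perturbation $\delta$ defined in (\ref{eq:delta}) vanishes with $e_\theta$. Writing $u_\text{out}$ in the form (\ref{eq:newouterinput}) and using the inner-loop estimation convergence gives $\hat a\to a$ and $\hat b\to b$ along trajectories; the expression for $\delta$ then reduces to a continuous function of $\theta-\theta_d$ that is identically zero when $\theta=\theta_d$, and the continuity of $\tan$ and $\sec$ on $|\theta|<\pi/2$ yields $\delta\to 0$ as the inner loop converges. Combining the two vanishing perturbations with the Hurwitz property of the state matrix in (\ref{eq:outer_edyn_delta}) gives $\mathbf{e}_\textbf{out}\to\mathbf{0}$, which together with Theorem~\ref{T:inner} establishes regional asymptotic stability of the full interconnected tracking system on $|\theta|<\pi/2$.

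The main obstacle I expect is the well-posedness of the cascade: the outer loop generates $\theta_d$ through (\ref{eq:pitch_ref}), and nothing in the preceding results guarantees \emph{a priori} that the argument of $\cos^{-1}$ remains in $[-1,1]$, that $|\theta_d|<\pi/2$ is preserved throughout the transient, or that $\theta_d$ retains the two continuous derivatives required as a feedforward by the inner-loop feedback linearization (the inner law uses $\ddot{\theta}_d$). I would address this by restricting the set of admissible initial conditions and reference trajectories $(x_d,z_d)$ so that these regularity conditions hold along the closed-loop solution, thereby precluding finite escape before the ISS decay takes effect; absent such a restriction, only a semi-global or local version of the claim can be rigorously justified, which is consistent with the \emph{regional} qualifier in the statement.
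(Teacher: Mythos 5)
Your proposal is correct and follows essentially the same route as the paper's own proof: inner-loop convergence from Theorem~\ref{T:inner} drives both perturbations $\prescript{i}{}{\tilde{f}_{a_z}}/m$ and $\delta$ in (\ref{eq:outer_edyn_delta}) to zero, and the Hurwitz outer-loop state matrix then yields convergence of $\mathbf{e_\textbf{out}}$ by a converging-input, converging-state argument. Your closing caveat about the well-posedness of the reference $\theta_d$ generated by (\ref{eq:pitch_ref}) --- boundedness of the $\cos^{-1}$ argument, preservation of $|\theta_d|<\pi/2$ along the transient, and the existence of the feedforward derivatives $\dot{\theta}_d,\ddot{\theta}_d$ required by the inner loop --- is a legitimate gap that the paper's own proof does not address and that would need the kind of restriction on initial conditions and references you describe to be made fully rigorous.
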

\begin{proof}
    From Theorem \ref{T:inner}, the application of the input-output linearizing controller defined by (\ref{eq:inner_lin}) and (\ref{eq:virtual_input_law}) to the system (\ref{eq:sys_xdot}) while considering the output vector (\ref{eq:inner_y}) results in a closed-loop system, concerning the altitude and attitude dynamics, with error dynamics of the type (\ref{eq:inner_edyn}) that ensures asymptotic convergence of the state vector $\mathbf{x_{in}}$ to the desired values for $|\theta|<\pi/2$. Considering that the nonlinear system (\ref{eq:sys_xdot}) is composed by 6 state variables, and the sum of the entries of the relative degree vector $\mathbf{r_{in}}$ for the output (\ref{eq:inner_y}) is equal
to 4, there is an unobservable subsystem, characterized by the dynamics associated with the remaining 2 state variables, that corresponds to the horizontal movement dynamics expressed in (\ref{eq:outer_xdot}) and (\ref{eq:outer_y}). From Theorem \ref{T:outer}, the closed-loop horizontal dynamics resulting from the application of the input-output linearizing controller defined by (\ref{eq:outer_lin_hat}) and (\ref{eq:v_out}) has a Hurwitz characteristic polynomial. Hence, convergence of the input $\delta$ (\ref{eq:delta}) to zero in the tracking system (\ref{eq:outer_edyn_delta}) implies the convergence of the tracking error to zero. Given the stabilizing inner-loop adaptive tracking controller, $\theta$ asymptotically converges to $\theta_d$ and the inner-loop aerodynamic estimates converge to the true values, which  we have that $\delta$ converges to zero. Hence, the tracking system (\ref{eq:outer_edyn_delta}) is asymptoticaly stable for $|\theta|<\pi/2$. Thus, the complete cascaded tracking system resulting from the application of the inner input-output linearizing controller given by (\ref{eq:inner_lin}) and (\ref{eq:virtual_input_law}), and the outer input-output linearizing controller defined by (\ref{eq:outer_lin_hat}) and (\ref{eq:v_out}) is asymptotically stable for $|\theta|<\pi/2$.
\end{proof}

From Theorem \ref{T:cascade}, it has been established that the proposed inner-outer control solution is stable in the region of interest and able to track sufficiently smooth arbitrary trajectories inside the pitch plane, as long as the physical limitations of the vehicle are respected.
\section{Implementation in Simulation} \label{sec:implement}

To test the proposed architecture, a realistic simulation environment, composed by the pitch plane non-linear dynamics model, the control system, and the environmental properties, was implemented in Matlab\&Simulink\textsuperscript{\circledR{}}. Additionally, a reference vehicle and a set of mission scenarios had to be selected. In this section, the simulation environment, reference vehicle, and mission scenarios, as well as the implementation details of the control architecture, are presented.

\subsection{Reference Vehicle}

The reference vehicle preliminary design was performed considering the typical characteristics and performance parameters of suborbital launch vehicles currently under operation so as to test the simulation tool with realistic design values. More specifically, liquid engine vehicles with thrust vector control capabilities were considered, with an available total impulse which allows reaching space according to the Kármán line definition (approximately $100\,$km altitude). Figure \ref{fig:rocket_scheme} shows a schematic of the vehicle with some important dimensions, while Table \ref{tab:rocket_char} sums up the main properties of the reference vehicle.
\begin{figure*}
\centering
\includegraphics{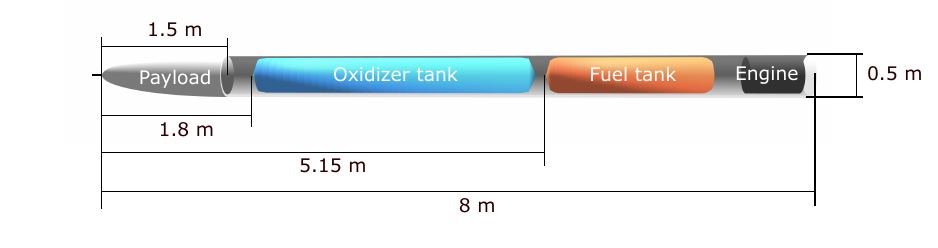}
\caption{\bf{Scheme of the vehicle.}}
\label{fig:rocket_scheme}
\end{figure*}
\begin{table}
\renewcommand{\arraystretch}{1.3}
\caption{\bf Main vehicle characteristics}
\label{tab:rocket_char}
\centering
\begin{tabular}{|c|c|}
\hline
\bfseries Parameter & \bfseries Value \\
\hline\hline
Length      &  8\,m \\
Max diameter & 50\,cm \\
Lift-off mass & 1250\,kg\\
Dry mass & 600\,kg\\
Dry inertia ($j_y$) & 3100\,kg.m$^2$\\
Dry $x_{cm}$ & 4.18\,m\\
\hline
\end{tabular}
\end{table}
An axially symmetric geometry was selected, composed by a cylindrical fuselage with an elliptical nose cone. Given that stability is to be provided by the thrust vector control system, no aerodynamic fins were added for passive stability. Using the selected geometry, the aerodynamic coefficients and center of pressure location were obtained through Computational Fluid Dynamics (CFD) simulations and stored as functions of the angle of attack and Mach number to be interpolated during simulation. Regarding the propulsion system, the well-established liquid oxygen and refined querosene oxidizer/fuel pair (LOX/RP1) was considered. Propulsion system parameters are collected in Table \ref{tab:prop_char}.

\begin{table}
\renewcommand{\arraystretch}{1.3}
\caption{\bf Propulsion system parameters}
\label{tab:prop_char}
\centering
\begin{tabular}{|c|c|}
\hline
\bfseries Parameter & \bfseries Value \\
\hline\hline
 Total propellant mass & 650\,kg   \\
$I_{\text{sp}_0}$               & 300\,s  \\
Oxidizer/fuel mass ratio & 2.5\\
LOX density & 1141\,kg.m$^{-2}$\\
RP1 density & 800\,kg.m$^{-2}$\\
Oxidizer tank volume & 0.41\,m$^{3}$\\
Fuel tank volume & 0.23\,m$^{3}$\\
\hline
\end{tabular}
\end{table}

\subsection{Mission Scenarios}

Two distinct mission scenarios were defined having in mind potential applications for actively controlled sounding rockets. Both scenarios refer to the propulsive stage of the flight, i.e., from lift-off until engine shut-off, during which thrust vectoring control can be used to actively steer the vehicle. It is assumed that during a posterior coasting phase, occurring in low density atmospheric regions, the vehicle may be stabilized by an additional reaction control system if necessary. In order to ensure sufficiently smooth reference trajectories, the desired time evolution of the inertial position is prescribed in terms of the inertial accelerations, $\ddot{x}_d$ and $\ddot{z}_d$, which are then integrated to retrieve the desired position and velocity.

\subsubsection{Mission Scenario I}
The first scenario consists of a strictly vertical trajectory, i.e, the goal of the trajectory tracking controller is to prevent horizontal motion in the presence of external disturbances. 

The desired inertial acceleration for this scenario is given by
\begin{subequations}\label{eq:scenI}
\begin{equation}\label{eq:scenIvert}
  \ddot{x}_d(t) = 
  \begin{cases}
  10\,,\hspace{82pt} 0<t\leq30 \\[5pt]
 2.5\cos\left(\frac{2\pi}{30}t\right)+7.5\,,\hspace{10pt} 30<t\leq60 \\[5pt]
  10\,,\hspace{79pt} 60<t<t_{f} 
  \end{cases}\,,
 \end{equation}
 \begin{equation}
  \ddot{z}_d(t) = 0\,.
\end{equation}
\end{subequations}
Looking at (\ref{eq:scenI}), the desired vertical acceleration is set at a constant value of 10\,m.s$^{-2}$ until burnout, at $t=t_f$, apart from a period during which it follows a sinusoidal curve. The sinusoidal curve translates into a momentary reduction in acceleration to minimize the loads on the vehicle during the maximum dynamic pressure region, also known as \textit{Max Q}. Desired position and velocity are obtained by integration considering zero initial conditions, leading to an altitude at burnout of 48.8\,km and a vertical velocity of 955\,m.s$^{-1}$.

\subsubsection{Mission Scenario II}
The second mission scenario has the same vertical acceleration profile (Eq. \ref{eq:scenIvert}) while imposing horizontal motion. Horizontal motion is imposed with the objective of initially deviating the vehicle from the launch site and then recovering a vertical trajectory before burnout. This trajectory requirement may be useful to increase safety, to perform remote sensing either in an atmospheric or ground region away from the launch site, and/or to adjust the potential landing zone for both guided and unguided recovery.    
The desired horizontal acceleration for this scenario is given by
\begin{equation}\label{eq:scenII}
  \ddot{z}_d(t) = 
  \begin{cases}
  0\,,\hspace{82pt} 0<t\leq20 \\[5pt]
 2.5\left[\cos\left(\frac{2\pi}{40}t\right)+1\right]\,,\hspace{10pt} 20<t\leq60 \\[5pt]
  -2.5\left[\cos\left(\frac{2\pi}{40}t\right)+1\right]\,,\hspace{5pt} 60<t<t_{f} 
  \end{cases}\,,
\end{equation}
Once again, inertial velocity and position are obtained by integration considering zero initial conditions, leading to a maximum downrange distance of 4\,km.

\subsection{Simulation Model}

The simulation model is divided into two major components: the trajectory tracking control architecture, which corresponds to the computational implementation of the diagram in Fig. \ref{fig:arch_dia}, and the vehicle physical model. The physical model computes the evolution of the flight variables by integrating the set of non-linear pitch plane equations of motion presented in (\ref{eq:position})-(\ref{eq:attitudekin}), given the inputs determined by the control architecture. To do so, it needs to generate the time-varying parameters of the vehicle, such as mass, inertia, and aerodynamic coefficients, as well as the environmental conditions, namely the atmospheric pressure, air density, wind, speed of sound, and gravitational acceleration. The evolution of the MCI properties of the vehicle is determined by the propellant consumption given in terms of mass flow rate (Eq. \ref{eq:massflow}), considering evenly distributed mass with respect to the longitudinal axis and neglecting propellant sloshing. The aerodynamic coefficients and center of pressure location are stored in two-dimensional look-up tables, available as Simulink blocks, which perform interpolation to retrieve the instantaneous values depending on the current angle of attack and Mach number. As for the atmospheric conditions, the 1976 U.S standard atmosphere model was implemented, which describes the evolution of temperature and pressure with altitude using average annual values, from which density and speed of sound are derived. Wind is introduced through the summation of the average horizontal wind components from the U.S Naval Research Laboratory horizontal wind model with a stochastic component (wind gusts) added from the Dryden model, both available as Simulink blocks. Finally, the gravitational acceleration is computed according to Eq. (\ref{eq:gravity}).

\subsection{Architecture Implementation Details}

The computational implementation of the proposed inner-outer trajectory tracking control architecture follows the scheme in Fig. \ref{fig:arch_dia}. Nevertheless, some implementation details will be addressed to ensure the reproducibility of the simulation results. Starting from the inner-loop, Table \ref{tab:inner_gains} presents the selected control and adaptation gains.
\begin{table}[h!]
\renewcommand{\arraystretch}{1.3}
\caption{\bf Inner-loop gains}
\label{tab:inner_gains}
\centering
\begin{tabular}{|c|c|c|c|c|c|}
\hline
 $k_{1_x}$ & $k_{2_x}$ & $k_{1_\theta}$ & $k_{2_\theta}$ & $\gamma_x$ &  $\gamma_\theta$\\
\hline\hline
2.5 & 4.5 & 12 & 10 & 5 & 5   \\
\hline
\end{tabular}
\end{table}
We note that the derived stability of the inner-loop implies positive control and adaptation gains. Considering that constraint, the gains were tuned to ensure a sufficiently fast inner-loop response while avoiding excessive input values. Regarding the adaptation gains, the values were set so that the estimates are able to track the time-varying vertical aerodynamic force and moment with small errors, however, due to the stochastic and noisy nature of the wind gusts, the gain for the aerodynamic moment estimate was selected so as to filter the high frequency disturbances and avoid highly oscillatory response. Moreover, the magnitude of the control and adaptation gains determines the bandwidth of the actuation signals, which has to respect the limited bandwidth of the on-board actuators, representing an additional constraint on gain selection. 

Regarding the outer-loop, the chosen LQR tuning matrices are the following:
\begin{equation}
\renewcommand{\arraystretch}{1.3}
{\bf Q} =\begin{bmatrix}
    5 & 0 & 0\\
    0 & 1 & 0\\
    0 & 0 & 1
\end{bmatrix}\,,\hspace{10pt}
{\bf R} = 60\,,
\end{equation}
which lead to the gains present in Table \ref{tab:outer_gains}.
\begin{table}[h!]
\renewcommand{\arraystretch}{1.3}
\caption{\bf Outer-loop gains}
\label{tab:outer_gains}
\centering
\begin{tabular}{|c|c|c|}
\hline
 $k_{z}$ & $k_{\dot{z}}$ & $k_i$ \\
\hline\hline
0.61 & 1.11 & 0.13  \\
\hline
\end{tabular}
\end{table}
The LQR tuning matrices were selected to ensure sufficient time-scale separation between the outer and inner loops, while limiting the horizontal position tracking error. Additional caution was taken when differentiating the pitch reference command generated by the outer-loop to obtain the angular velocity and acceleration commands by adding first-order low-pass filters, given that the output pitch reference may carry noise from the real-time aerodynamic estimates.

\section{Simulation Results}\label{sec:sim_reuslts}

Using the reference vehicle parameters inside the simulation model, the two previously introduced mission scenarios were simulated to verify the proposed control architecture and derive its performance. Firstly, the two scenarios were simulated for the nominal case, i.e, assuming complete knowledge of the model parameters and for a given wind condition. Then, a Monte Carlo robustness analysis was performed where uncertainty was added to model parameters and the noise seed associated with the wind gusts was changed between runs. The performance of the proposed control system is evaluated in the absence of sensor noise and assuming full-state knowledge, given that the state variables in case can be estimated by an inertial navigation system as the one proposed by the authors in \cite{dossantos}. The nominal wind profile is depicted in Fig. \ref{fig:wind}. The impact of the wind gusts generated through the Dryden model is visible up to approximately 20\,km, after which the average wind dominates.  
\begin{figure}[htpb]\label{fig:wind}
\centering
\includegraphics{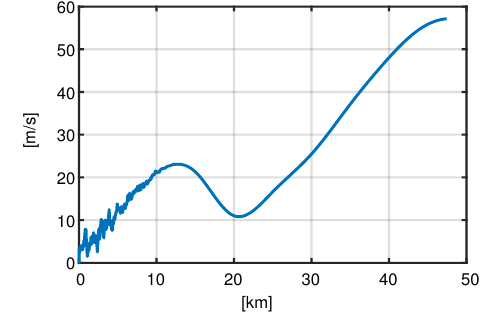}\\
\caption{\textbf{Wind profile.}}
\end{figure}

\subsection{Mission Scenario I}

Starting with the first mission scenario (vertical trajectory), Figures \ref{fig:vert_traj} to \ref{fig:vert_actuation} display the most relevant simulated flight, control, and estimation variables, while Tables \ref{tab:track_one} and \ref{tab:est_one} present control and estimation performance metrics, respectively. Looking at Fig. \ref{fig:vert_traj}, it is seen that the vehicle is able to track the desired vertical trajectory with limited tracking error, which is also backed by the root mean square error (RSME) values in Table \ref{tab:track_one}.  
\begin{table}[h]
\renewcommand{\arraystretch}{1.3}
\caption{\bf Tracking performance and control effort for mission scenario I}
\label{tab:track_one}
\centering
\begin{tabular}{|c|c|c|c|c|}
\hline
\bfseries y & \bfseries RMSE & &\bfseries u & \bfseries RMS\\
\hline\hline
$x$   &  0.008\,m && $T$ & 18.52\,kN\\
$\theta$ & 0.06\,$^{\circ}$ &&$\mu$ & 0.51\,$^{\circ}$ \\
$z$ & 0.11\,m && & \\
\hline
\end{tabular}
\end{table}
The impact of the wind on the tracking performance is clearly visible at the initial segment of the flight, where wind gusts are present and dynamic pressure is higher. Furthermore, the horizontal tracking performance is considerably lower than the vertical one, which is also attributed to the horizontal direction of the wind disturbance. From Fig. \ref{fig:vert_vel}, we see that the desired pitch angle and inertial velocity are also correctly tracked by the vehicle. Once again, the impact of the wind gusts is seen on the horizontal velocity, as well as on the pitch reference generated by the outer-loop. In fact, horizontal position control is enabled by the adaptation of the pitch reference to the external wind disturbance, which is then correctly tracked by the inner-loop controller. 

Figure \ref{fig:vert_estimation} displays the estimation of the inertial aerodynamic forces and moment. The aerodynamic estimates are able to follow the true values, which was to be expected given the good trajectory tracking performance,. The wind disturbance causes higher impact on the aerodynamic moment and on the horizontal aerodynamic force, resulting in worse estimation performances when compared to the vertical force estimation. This fact is also verified by the estimation RMSE and relative RMSE (\%RMSE) presented in Table \ref{tab:est_one}. Moreover, higher time delay for the moment and horizontal force estimation is verified, which is the result of setting the correspondent gains at a value that filters the impact of the wind gusts at the cost of an increased delay.
\begin{table}
\renewcommand{\arraystretch}{1.4}
\caption{\bf Aerodynamic estimation performance for mission scenario I}
\label{tab:est_one}
\centering
\begin{tabular}{|c|c|c|}
\hline
\bfseries Parameter & \bfseries RMSE & \bfseries \%RMSE\\
\hline\hline
$\prescript{i}{}{f_{a_x}}$   &  92.6\,N & 7.0\,\%\\
$\tau_a$ & 124.0\,N.m & 26.0\,\% \\
$\prescript{i}{}{f_{a_z}}$  & 37.2\,N & 21.1\,\% \\
\hline
\end{tabular}
\end{table}
By comparing the horizontal aerodynamic force estimate and the generated pitch reference, it is possible to conclude a clear similarity in shape, which indicates once more that, as seen in Eq. (\ref{eq:pitch_ref}), the pitch reference generated by outer-loop controller strongly depends on the horizontal aerodynamic load, which is a mark of the adaptive nature of the proposed architecture. It is also relevant to note that the proposed adaptation laws are able to track highly time-varying aerodynamic loads, even though the design condition assumed constant behavior. This means that the simulation results validate the theoretical assumption that sufficiently high adaptation gains allow to track time-varying signals. 

Finally, Figure \ref{fig:vert_actuation} displays the actuation signals, i.e, the thrust magnitude ($T$) and the thrust vector deflection angle ($\mu$), over time. Additionally, the mass flow rate ($|\dot{m}|$) is also depicted. The thrust magnitude computed by the inner-loop controller follows a decreasing trend given by the constant segments of the desired vertical acceleration and the mass reduction over time. The impact of the requested vertical acceleration reduction is also visible between 30 and 60 seconds after lift-off. As for the thrust vector deflection angle, it exhibits a similar trend to both the pitch reference and horizontal aerodynamic force, translating the impacting of the average horizontal wind summed with the stochastic gusts. The root mean square (RMS) of the actuation signals is collected in Table \ref{tab:track_one}.

\begin{figure*}
\centering
\includegraphics{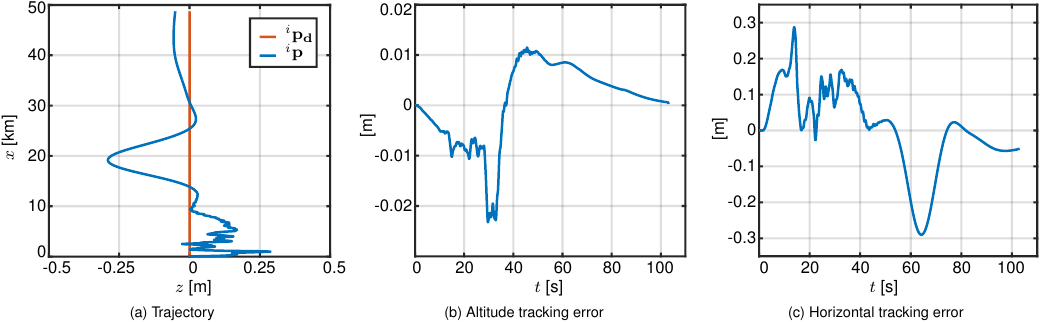}
\caption{\bf{Trajectory tracking for mission scenario I.}}
\label{fig:vert_traj}
\end{figure*}
\begin{figure*}
\centering
\includegraphics{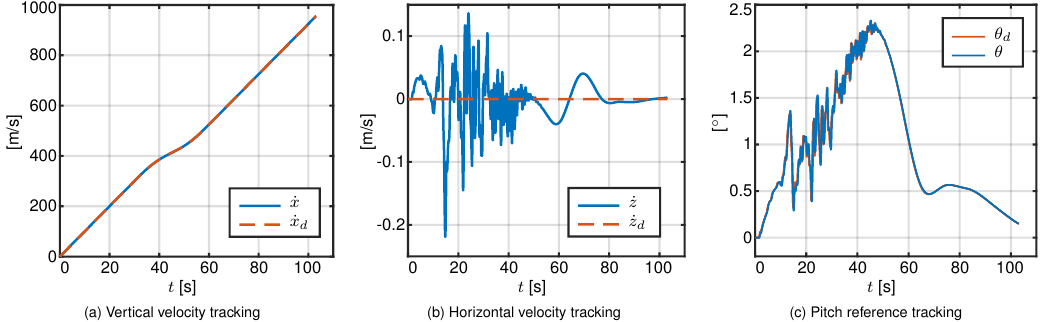}
\caption{\bf{Velocity and pitch reference tracking for mission scenario I.}}
\label{fig:vert_vel}
\end{figure*}
\begin{figure*}
\centering
\includegraphics{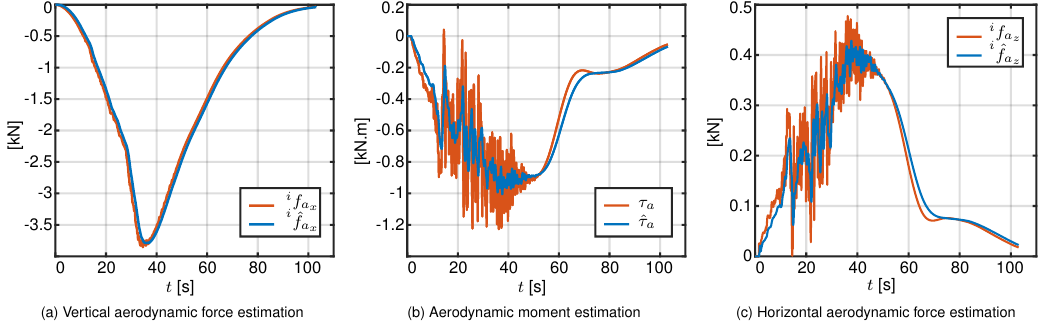}
\caption{\bf{Aerodynamic forces and moment estimation for mission scenario I.}}
\label{fig:vert_estimation}
\end{figure*}
\begin{figure*}
\centering
\includegraphics{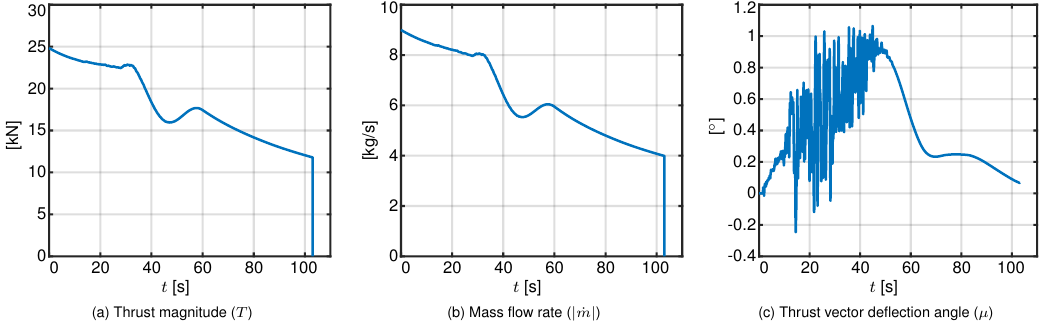}
\caption{\bf{Actuation for mission scenario I.}}
\label{fig:vert_actuation}
\end{figure*}
\begin{figure*}
\centering
\includegraphics{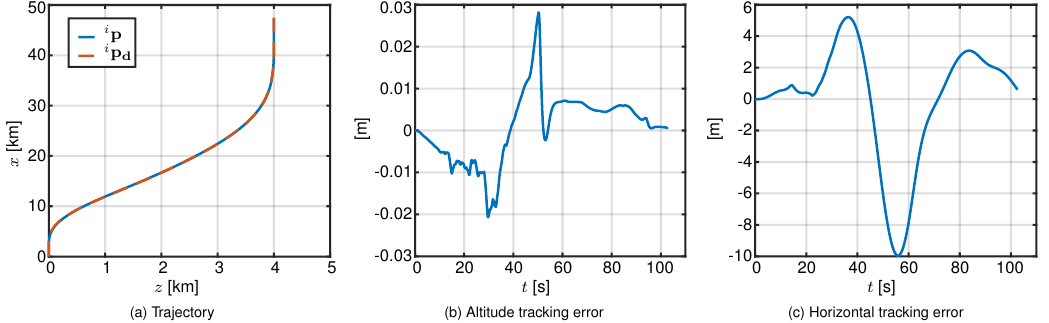}
\caption{\bf{Trajectory tracking for mission scenario II.}}
\label{fig:lat_traj}
\end{figure*}
\begin{figure*}
\centering
\includegraphics{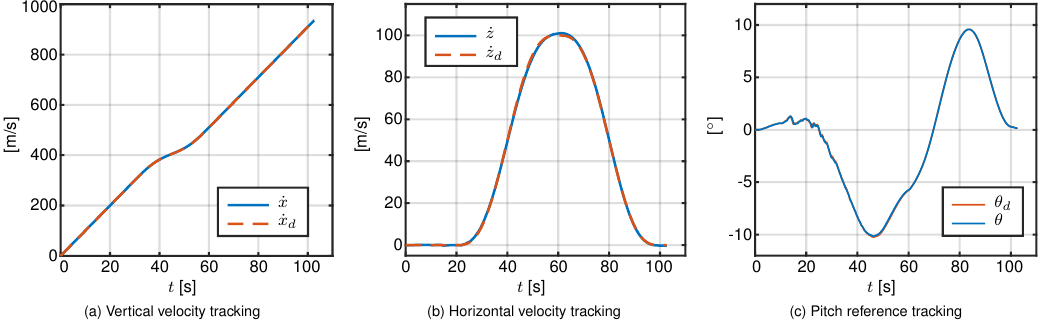}
\caption{\bf{Velocity and pitch reference tracking for mission scenario II.}}
\label{fig:lat_vel}
\end{figure*}
\begin{figure*}
\centering
\includegraphics{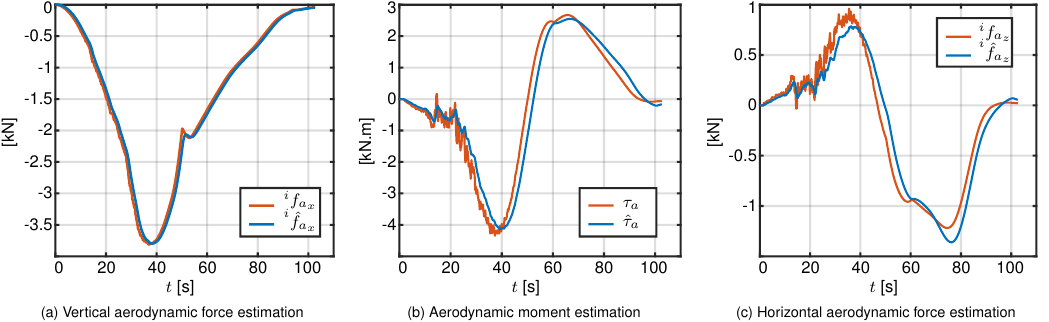}
\caption{\bf{Aerodynamic forces and moment estimation for mission scenario II.}}
\label{fig:lat_estimation}
\end{figure*}
\begin{figure*}
\centering
\includegraphics{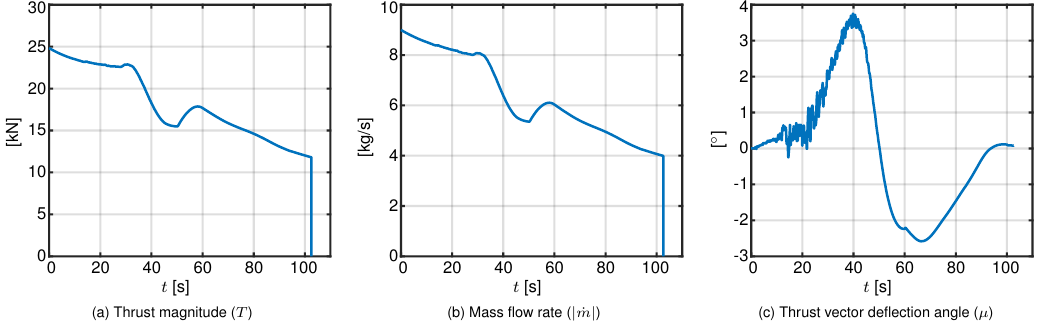}
\caption{\bf{Actuation for mission scenario II.}}
\label{fig:lat_actuation}
\end{figure*}

\subsection{Mission Scenario II}

Moving on to the second mission scenario, Figures \ref{fig:lat_traj} to \ref{fig:lat_actuation} display the most relevant simulated flight, control, and estimation variables, while Tables \ref{tab:track_two} and \ref{tab:est_two} present control and estimation performance metrics, respectively. In Fig. \ref{fig:lat_traj}, the trajectory and the corresponding position tracking errors are displayed. Using the same parameters for the control architecture, i.e., without re-tuning the gains, the rocket is able to track a more demanding trajectory in terms of horizontal movement. Table \ref{tab:track_two} presents the tracking performance and control effort metrics for this mission scenario.
\begin{table}[h]
\renewcommand{\arraystretch}{1.3}
\caption{\bf Tracking performance and control effort for mission scenario II}
\label{tab:track_two}
\centering
\begin{tabular}{|c|c|c|c|c|}
\hline
\bfseries y & \bfseries RMSE & &\bfseries u & \bfseries RMS\\
\hline\hline
$x$   &  0.008\,m && $T$ & 18.59\,kN\\
$\theta$ & 0.07\,$^{\circ}$ &&$\mu$ & 1.76\,$^{\circ}$ \\
$z$ & 0.96\,m && & \\
\hline
\end{tabular}
\end{table}
By comparison with the values obtained for the first scenario, we see that the RSME associated with horizontal position tracking is now superior, as it would be expected due to the non-zero horizontal acceleration reference provided to the outer-loop. Looking at Fig. \ref{fig:lat_vel}, it is observed that a maximum horizontal velocity of 100\,m/s is reached, coinciding with the largest error in horizontal velocity tracking. The pitch reference generated by the outer-loop controller is now substantially more demanding and reflects the desired horizontal motion, i.e, the vehicle turns downrange and then turns back to recover a vertical trajectory. Nevertheless, the inner-loop controller is still able to track the required pitch angle with similar performance.

Regarding the aerodynamic estimation performance, Fig. \ref{fig:lat_estimation} displays the estimates over time, while Table \ref{tab:est_two} presents the estimation performance metrics. Through visual inspection of the plots, we see that the aerodynamic estimation for this scenario was also successful. As expected, higher horizontal loads are experienced during this trajectory, which reflects into worse relative estimation performance for the aerodynamic moment and horizontal aerodynamic force when compared to the first scenario.
\begin{table}[h]
\renewcommand{\arraystretch}{1.4}
\caption{\bf Aerodynamic estimation performance for mission scenario II}
\label{tab:est_two}
\centering
\begin{tabular}{|c|c|c|}
\hline
\bfseries Parameter & \bfseries RMSE & \bfseries \%RMSE\\
\hline\hline
$\prescript{i}{}{f_{a_x}}$   &  96.1\,N & 6.8\,\%\\
$\tau_a$ & 457.9\,N.m & 29.7\,\% \\
$\prescript{i}{}{f_{a_z}}$  & 147.6\,N & 28.4\,\% \\
\hline
\end{tabular}
\end{table}
The worse estimation performance propagates into the trajectory tracking performance, further justifying the increase in the horizontal position tracking error. Nevertheless, stability is still ensured and the tracking performance is deemed acceptable, specially when considering the considerable magnitude of the velocity vector.

Given that the vertical position reference is the same as before, the thrust magnitude computed by the inner-loop is very similar, as seen in Fig. \ref{fig:lat_actuation} and attested by the RMS of the signal shown in Table \ref{tab:track_two}. On the other hand, the thrust vector deflection angle has now a higher RMS value as a result of the non-zero pitch reference required to track the desired downrange motion. The wind gusts impact is still very noticeable on the deflection angle, causing oscillatory behavior, although now with a lower relative magnitude.

\subsection{Robustness Analysis}

The robustness of the proposed trajectory tracking controller was evaluated resorting to Monte Carlo (MC) techniques. For each mission scenario, a total of 100 MC runs were conducted in which several rocket parameters were randomly changed in between consecutive runs, while feeding the nominal values to the controller. The wind gusts generation seeds were also randomly generated. This analysis gains more relevance when considering the indirect estimation of the horizontal aerodynamic force, which relies on the knowledge of the static stability margin, and the use of several MCI properties on the feedback linearization laws. Table \ref{tab:rob_param} shows the rocket parameters subject to random variability, which was obtained through the multiplication of said parameters by unitary mean Gaussian distributions, as well as the associated 3$\sigma$ values.
\begin{table}[h]
\renewcommand{\arraystretch}{1.3}
\caption{\bf Selected parameters for robustness analysis}
\label{tab:rob_param}
\centering
\begin{tabular}{|c|c|}
\hline
\bfseries Parameter & \bfseries 3\,$\sigma$ \\
\hline\hline
$m$      &  0.05 \\
$j_y$      &  0.1 \\
$x_{cm}$ & 0.1 \\
 $x_{cp}$ & 0.2\\
$C_D$ & 0.2\\
$C_L$ & 0.2\\
\hline
\end{tabular}
\end{table}
Note that by introducing uncertainty on both the center of mass and center of pressure it propagates into the control moment arm, $l$, and static stability margin, $SM$.

After completing all MC runs, it was verified that the proposed architecture was able to stabilize the vehicle and track the desired trajectory for all the runs and for both mission scenarios. This indicates that the proposed solution is robust to both external disturbances and model uncertainty. Nonetheless, this conclusion is limited to the assumed level of uncertainty. Tables \ref{tab:rob_one} and \ref{tab:rob_est_one} respectively present the robustness analysis results for the trajectory tracking and aerodynamic estimation performance for the first mission scenario. The results are presented in terms of the mean and standard deviation (STD) of each performance metric.
\begin{table}
\renewcommand{\arraystretch}{1.3}
\caption{\bf MC results for tracking performance and control effort - Mission scenario I}
\label{tab:rob_one}
\centering
\begin{tabular}{|c|c|c|}
\hline
\bfseries y & \bfseries Average RMSE & \bfseries STD \\
\hline\hline
$x$   &  0.008\,m &  0.0006\,m\\
$\theta$ & 0.06\,$^{\circ}$ & 0.03\,$^{\circ}$\\
$z$ & 0.12\,m & 0.009\,m\\
\hline
\hline
\bfseries u & \bfseries Average RMS & \bfseries STD\\
\hline
\hline
 $T$ & 18.37\,kN & 0.67\,kN \\
 $\mu$ & 0.52\,$^{\circ}$& 0.06\,$^{\circ}$ \\
\hline
\end{tabular}
\end{table}
\begin{table}
\renewcommand{\arraystretch}{1.4}
\caption{\bf MC results for aerodynamic estimation performance - Mission scenario I}
\label{tab:rob_est_one}
\centering
\begin{tabular}{|c|c|c|}
\hline
\bfseries Parameter & \bfseries Average \,\%RMSE & \bfseries STD\\
\hline\hline
$\prescript{i}{}{f_{a_x}}$   & 13.8\,\% & 8.0\,\% \\
$\tau_a$ & 21.0\,\% & 2.4\,\% \\
$\prescript{i}{}{f_{a_z}}$  &  18.2\,\% & 6.2\,\% \\
\hline
\end{tabular}
\end{table}
It is noteworthy that, overall, the performance has not deteriorated in the presence of uncertainty. The largest decrease in  performance is identified for the vertical aerodynamic force estimation, which is attributed to its dependency on the mass of the vehicle, a parameter which was subject to uncertainty.

Likewise, the MC results for the second mission scenario are collected in Tables \ref{tab:rob_two} and \ref{tab:rob_est_two}.
\begin{table}[t]
\renewcommand{\arraystretch}{1.3}
\caption{\bf MC results for tracking performance and control effort - Mission scenario II}
\label{tab:rob_two}
\centering
\begin{tabular}{|c|c|c|}
\hline
\bfseries y & \bfseries Average RMSE & \bfseries STD \\
\hline\hline
$x$   &  0.008\,m &  0.0007\,m\\
$\theta$ & 0.07\,$^{\circ}$ & 0.007\,$^{\circ}$\\
$z$ & 0.97\,m & 0.04\,m\\
\hline
\hline
\bfseries u & \bfseries Average RMS & \bfseries STD\\
\hline
\hline
 $T$ & 18.62\,kN & 0.69\,kN \\
 $\mu$ & 1.80\,$^{\circ}$& 0.52\,$^{\circ}$ \\
\hline
\end{tabular}
\end{table}
\begin{table}[t]
\renewcommand{\arraystretch}{1.4}
\caption{\bf MC results for aerodynamic estimation performance - Mission scenario II}
\label{tab:rob_est_two}
\centering
\begin{tabular}{|c|c|c|}
\hline
\bfseries Parameter & \bfseries Average \,\%RMSE & \bfseries STD\\
\hline\hline
$\prescript{i}{}{f_{a_x}}$   & 13.2\,\% & 7.4\,\% \\
$\tau_a$ & 23.1\,\% &0.6\,\% \\
$\prescript{i}{}{f_{a_z}}$  &  27.8\,\% & 6.7\,\% \\
\hline
\end{tabular}
\end{table}
Also for this scenario the largest performance decrease happens for the vertical aerodynamic force estimation, with the remaining metrics showing no substantial shift.

\section{Conclusions}\label{sec:conc}

With the completion of this work, a pitch plane trajectory tracking controller for underactuated sounding rockets has been successfully derived and tested in simulation. Preliminary results indicate that the methodology adopted by the authors is able to track arbitrary, sufficiently smooth trajectories in large flight envelopes while continuously estimating the aerodynamic loads acting on the vehicle. In fact, the adaptive nature of the controller enables the use of feedback linearizing laws with limited knowledge of the aerodynamic characteristics of the vehicle. Moreover, the integrated design of position and attitude control was proven to be advantageous when aiming to derive global stability proofs for the trajectory tracking problem. It is also worth mentioning that controlling the horizontal position of the vehicle through the zero dynamics of the inner-stabilized system led to a satisfactory tracking performance which is robust to both external disturbances and model uncertainties. As future work, the authors aim to extend this control solution to the complete 6 DoF and validate it in a higher fidelity simulator, which shall account for other dynamic contributions that have been neglected so far, namely the curvature and rotation of the Earth and the MCI time derivatives.


\appendices{}              

\section{Proof of Proposition \ref{prop:rel_deg}} \label{app:proof1}       

\begin{proof}
     According to the definition in \cite{Isidori}, the nonlinear system defined by (\ref{eq:inner_xdot})-(\ref{eq:inner_y}) has a vector relative degree ${\mathbf{r_{in}}} = \{r_1,\,r_2\}$ at a point $\mathbf{x^*_{in}}$ if: \\
     \begin{enumerate}
         \item ${\mathcal{L}}_{{\mathbf{g}}_i}{\mathcal{L}}^{k}_{\mathbf{f}}{\mathbf{h}}_j({\mathbf{x}})=0$ for all $i,\,j \in  \{1,\,2\}$, $k \in \{0,\,...\,,\,r_j-2\}$, and for all $\mathbf{x_{in}}$ in a neighbourhood of $\mathbf{x^*_{in}}$;\\
         \item  The \textit{decoupling matrix} $\boldsymbol{\Lambda}_{\textbf{in}}(\mathbf{x_{in}})$ is non-singular at the point $\mathbf{x^*_{in}}$.
     \end{enumerate}
\vspace{10pt}
     
Regarding the first condition, it is straightforward to compute the respective Lie derivatives and conclude that it is verified for ${\mathbf{r_{in}}} = \{2,\,2\}$ and for all points ${\mathbf{x_{in}}}\in {\mathbb{R}}^2$. As for the second condition, the \textit{decoupling matrix} $\boldsymbol{\Lambda}_{\textbf{in}}(\mathbf{x_{in}})$ is given by
\begin{equation*}
\renewcommand{\arraystretch}{1.5}
    \boldsymbol{\Lambda}_{\textbf{in}}(\mathbf{x_{in}}) = 
    \begin{bmatrix}
        \frac{\cos{\theta}}{m} & \frac{\sin{\theta}}{m}\\
        0 & \frac{l}{j_y}
    \end{bmatrix}\,,
\end{equation*}
with determinant 
\begin{equation*}
    \text{det}\left(\boldsymbol{\Lambda}_{\textbf{in}}(\mathbf{x_{in}})\right)=\frac{\cos{\theta}\,l}{m\,j}\,.
\end{equation*}
Hence, the \textit{decoupling matrix} is invertible at any point respecting $|\theta|<\pi/2$. Thus, the inner-loop dynamics defined by (\ref{eq:inner_xdot})-(\ref{eq:inner_y}) have a well-defined vector relative degree ${\mathbf{r_{in}}}=\{2,2\}$ at any point $\mathbf{x_{in}}$ where the condition $|\theta|<\pi/2$ is satisfied.
\end{proof}

\section{Proof of Lemma \ref{eq:lemma_inner}} \label{app:lemma1}

\begin{proof}
    Let us define the following change of variables 
\begin{equation*}
    \alpha_1 = \xi_1\,,\hspace{10pt}\alpha_2 = \xi_2 - \beta\,,
\end{equation*}
and the auxiliary Lyapunov function
\begin{equation*}
    V_1(\alpha_1) = \frac{1}{2}\alpha_1^2\,.
\end{equation*}
The time derivative of the auxiliary Lyapunov function $V_1$ is
\begin{equation*}
    \dot{V}_1 = \dot{\alpha}_1\alpha_1 = \alpha_1\alpha_2 + \alpha_1\beta\,,
\end{equation*}
which for $\beta = -k_1\alpha_1$ reshapes into
\begin{equation*}
    \dot{V}_1 = -k_1\alpha_1^2 +\alpha_1\alpha_2\,.
\end{equation*}
Now let us define the Lyapunov function
\begin{equation*}
    V\left(\alpha_1,\alpha_2,\tilde{d}\right) = V_1 + \frac{1}{2}\alpha_2^2 + \frac{1}{2\gamma}{\tilde{d}}^2\,.
\end{equation*}
Its time derivative is given by
\begin{equation*}
    \dot{V} = \Dot{V}_1 + \dot{\alpha}_2\alpha_2 + \gamma^{-1}\dot{\Tilde{d}}\Tilde{d}\,,
\end{equation*}
which can be expanded to
\begin{equation*}
    \dot{V} = -k_1\alpha_1^2 +\alpha_1\alpha_2 + \left(  v + \Tilde{d} - \dot{\beta}\right)\alpha_2 + \gamma^{-1}\dot{\Tilde{d}}\Tilde{d}\,.
\end{equation*}
For $v= - \alpha_1 - k_2\alpha_2 + \dot{\beta}$ and $\dot{\Tilde{d}} = -\gamma\alpha_2$, which are equivalent to (\ref{eq:feedback_law}) and (\ref{eq:adapt_law}), the time derivative of the Lyapunov function $V$ is
\begin{equation*}
    \dot{V} = -k_1\alpha_1^2 - k_2\alpha_2^2 \leq 0\,,
\end{equation*}
which is negative semi-definite. Hence, according to Lyapunov's stability theorem \cite{Khalil}, the origin of the closed-loop system given by
\begin{equation}\label{eq:inner_edyn}
    \begin{bmatrix}
        \dot{\xi_1}\\
        \dot{\xi_2}
    \end{bmatrix}=
    \begin{bmatrix}
        0 & 1\\
        -(1+k_1k_2) & -(k_1+k_2)
    \end{bmatrix}
    \begin{bmatrix}
        \xi_1 \\
        \xi_2
    \end{bmatrix}+
    \begin{bmatrix}
        0\\
        \Tilde{d}
    \end{bmatrix}
\end{equation}
with disturbance dynamics given by (\ref{eq:adapt_law}), is globally stable. Equating the time derivative of the Lyapunov funtion $V$ to zero:
\begin{equation*}
    \dot{V}=0 \Leftrightarrow k_1\alpha_1^2 = -k_2\alpha_2^2 \Leftrightarrow \alpha_1 = \alpha_2 = 0 \Leftrightarrow \xi_1 = \xi_2 = 0\,,
\end{equation*}
since $k_1,\,k_2>0$. Thus, according to LaSalle's invariance principle \cite{Khalil}, the  origin of the error dynamics of the closed-loop system given by (\ref{eq:inner_edyn}) is globally asymptotically stable.
\end{proof}

\section{Proof of Proposition \ref{prop:out}}\label{app:prop2}

\begin{proof}
    According to the definition provided by Khalil \cite{Khalil}, the nonlinear single-input single-output (SISO) system defined by (\ref{eq:outer_xdot})-(\ref{eq:outer_y}) has relative degree $r_\text{out} = r$ at a point $\mathbf{x^{*}_\textbf{out}}$ if:\\
    \begin{enumerate}
        \item ${\mathcal{L}}_{g}{\mathcal{L}}^{i-r}_{f}h({\mathbf{x_\textbf{out}}})=0$\,, for all $i \in  \{1,\,2\,...\,,r-1\}$\,, and\\
        \item ${\mathcal{L}}_{g}{\mathcal{L}}^{r}_{f}h({\mathbf{x_\textbf{out}}})\neq0$\,,\\
    \end{enumerate}
for all $\mathbf{x_\textbf{out}}$ in a neighbourhood of $\mathbf{x^*_\textbf{out}}$. It is straightforward that both conditions are verified for $r_\text{out} = 2$ only. Likewise, the same conclusion could be reached by differentiating the output with respect to time and observe that the input only explicitly appears starting from the second derivative.
\end{proof}

\acknowledgments
The authors acknowledge the Portuguese Foundation for Science and Technology (FCT) for its financial support via the project LAETA Base Funding (DOI: 10.54499/UIDB/50022/2020) and CAPTURE (DOI:\linebreak 10.54499/PTDC/EEI-AUT/1732/2020). Pedro dos Santos holds a PhD scholarship from FCT (2023.00268.BD). The authors also gratefully acknowledge the Institute for Systems and Robotics (ISR - Lisboa) for providing access to their facilities, which were essential to the completion of this research.

\bibliographystyle{IEEEtran}

\bibliography{refs}




\thebiography
\begin{biographywithpic}
{Pedro dos Santos (M.Sc.’22)}{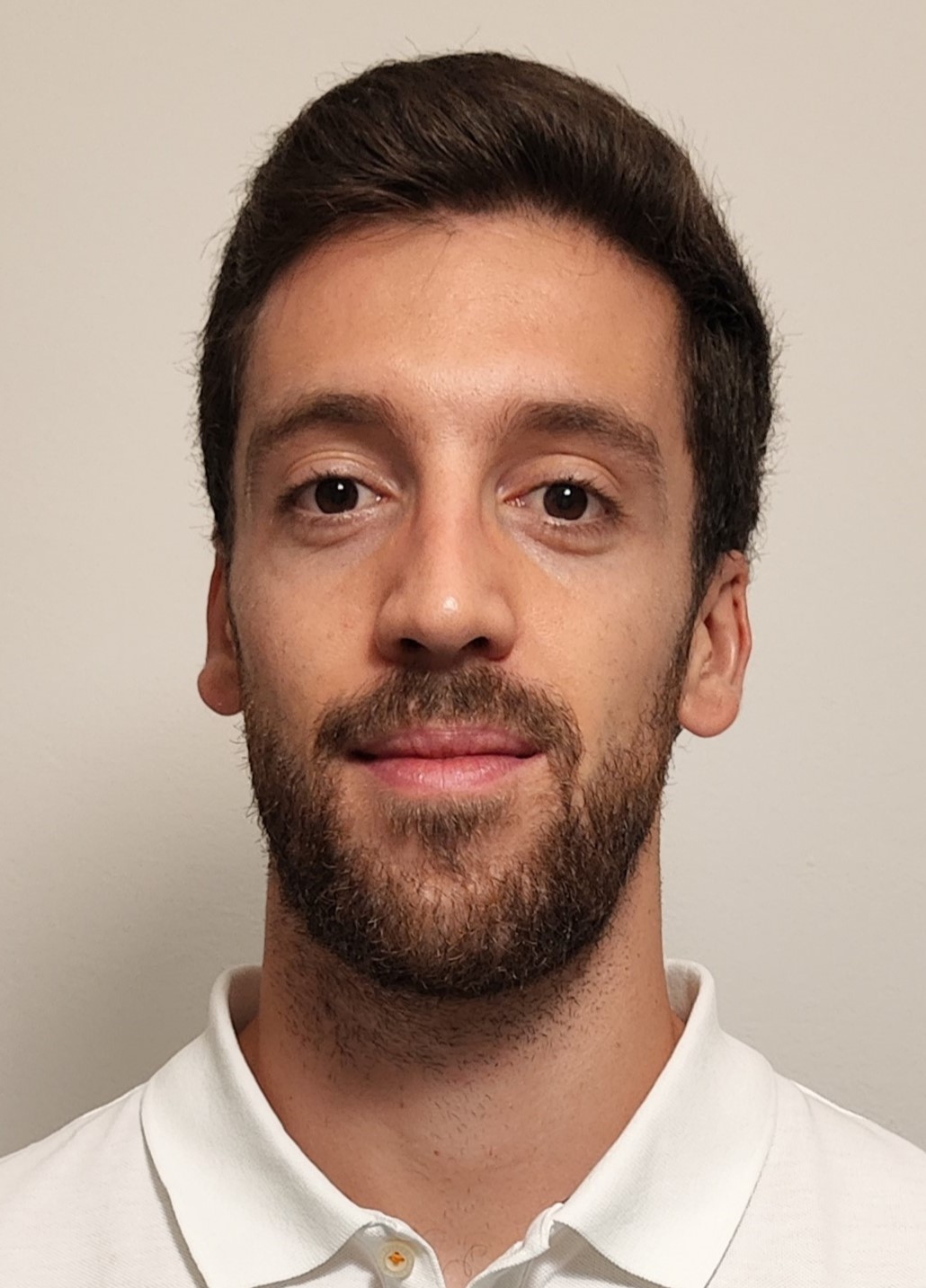}
received his B.S. and M.S. degrees in Aerospace Engineering from Instituto Superior
Técnico (IST), Lisbon, Portugal, in 2020, and 2022, respectively. He is currently pursuing a Ph.D degree in Aerospace Engineering at IST in affiliation with the Associated Laboratory
for Energy, Transports, and Aeronautics. His research interests are in the area of underactuated autonomous vehicles with a focus on non-linear and robust Guidance, Navigation, and Control (GNC) algorithms for launch vehicles. 
\end{biographywithpic} 

\begin{biographywithpic}
{Paulo Oliveira (Hab 16 Ph.D.’02 M.Sc.’91)}{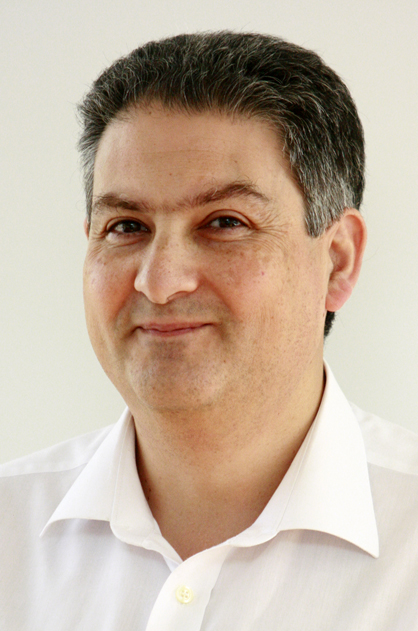}
received the “Licenciatura”,
M.S., and Ph.D. degrees in Electrical and Computer Engineering, and
the Habilitation in Mechanical Engineering from Instituto Superior
Técnico (IST), Lisbon, Portugal, in 1991, 2002, and 2016, respectively.
He is a Professor in the Department of Mechanical
Engineering of IST and Vice President for Scientific Affairs of the Associated Laboratory
for Energy, Transports, and Aeronautics. His research interests are
in the area of autonomous robotic vehicles with a focus on the fields of
estimation, sensor fusion, navigation, positioning, and mechatronics.
He is author or coauthor of more than 100 journal papers and 200 conference
communications. He participated in more than 30 European and Portuguese research projects, over the last 30 years. 

\end{biographywithpic}

\end{document}